\documentclass[10pt,twocolumn,twoside]{IEEEtran}
 \usepackage{cancel}
\usepackage{color}
\usepackage[normalem]{ulem}
\usepackage{graphicx, bm, mathrsfs}
\usepackage{cite,array, enumerate}
\setlength{\extrarowheight}{1pt}
\usepackage{amssymb}
\usepackage{dsfont, bm}
\usepackage{epstopdf}
\usepackage[cmex10]{amsmath}
\interdisplaylinepenalty=2500

\newtheorem{assumption}{Assumption}
\newtheorem{theorem}{Theorem}

\newtheorem{lemma}{Lemma}

\newtheorem{proposition}{Proposition}

\newtheorem{remark}{Remark}
\newtheorem{proof}{Proof}
\newtheorem{example}{Example}
\ifCLASSINFOpdf

\else

\fi

\hyphenation{op-tical net-works semi-conduc-tor}

\RequirePackage[normalem]{ulem} 
\RequirePackage{color}\definecolor{RED}{rgb}{1,0,0}\definecolor{BLUE}{rgb}{0,0,1} 
\begin{document}

\title{Distributed Networked Controller Design for Large-scale Systems under Round-Robin Communication Protocol}

\author{Tao~Yu, Junlin~Xiong
\thanks{Tao. Yu and Junlin. Xiong are with the Department of Automation, University of Science and Technology of China, Hefei 230026, China. (E-mail: \tt\small yutao16@mail.ustc.edu.cn, xiong77@ustc.edu.cn, junlin.xiong@gmail.com).}
}

\maketitle

\begin{abstract}
This paper studies the distributed $\mathcal{L}_{2}$-gain control problem for  continuous-time large-scale systems under Round-Robin communication protocol. In this protocol, each sub-controller obtains its own subsystem's state information continuously, while communicating with neighbors at discrete-time instants periodically. Distributed controllers are designed such that the closed-loop system is exponentially stable and that the prescribed $\mathcal{L}_{2}$-gain is satisfied. The design condition is obtained based on a time-delay approach and given in terms of linear matrix inequalities. Finally, three numerical examples are presented to illustrate the efficiency of the proposed scheme.
\end{abstract}
\begin{IEEEkeywords}
Distributed control; $\mathcal{L}_{2}$-gain; large-scale systems; Round-Robin communication protocol
\end{IEEEkeywords}

\IEEEpeerreviewmaketitle

\section{Introduction}
Large-scale systems generally comprise of many interconnected subsystems. The classical centralized control strategy for large-scale systems suffers from heavy computational burdens and decentralized control techniques often offer poor system performance \cite{kazempour2013stability}. As a result, distributed control has attracted intensive attention in the past few years \cite{massioni2009distributed,sun2018distributed,boem2019distributed}. Under distributed control strategy, each sub-controller could communicate with its neighbors at time instants. Therefore, not only local state information but also information from neighbors are used to form the control input.

In most existing literatures about distributed control for large-scale systems, it is often assumed that at each time instant, each sub-controller communicates with all neighbors simultaneously \cite{zhang2016energy,millan2019distributed,van2015synthesis}. However, this assumption is generally difficult to be satisfied when the communication energy and resources are limited \cite{ugrinovskii2014Round}. Therefore, different communication protocols are needed to orchestrate the communication order among the neighbors. These protocols
include, but are not limited to Round-Robin communication protocol \cite{zou2017state}, weight try-once-discard protocol \cite{zou2017ultimate}, stochastic communication protocol \cite{donkers2012stability} and gossip communication protocol \cite{yutao}.

Among the above various communication protocols, Round-Robin communication protocol is widely used for information transmission in networked control systems \cite{ugrinovskii2014Round}.
Here, we introduce Round-Robin communication protocol into the distributed control of large-scale systems. Under such a protocol, each sub-controller uses local information continuously, while communicating with neighbors at discrete-time instants periodically. That is, each sub-controller requires only one neighbor's latest state information at each time instant. Therefore, less transmission packets are needed and network bandwidth can be saved.

Generally speaking, there have been two approaches to address the control problems of systems under Round-Robin communication protocol. One is to describe them as hybrid systems, such as the issues about input-output stability properties of networked control systems \cite{nesic2004input}, the tradeoffs between transmission intervals, delays and performance of networked control systems \cite{heemels2010networked} and distributed state estimation over sensor networks \cite{xu2018finite}. The other one is to transform them into time-delay systems, such as the controller design and $\mathcal{L}_{2}$-gain analysis of networked control systems \cite{liu2012stability}, distributed state estimation with $H_{\infty}$ consensus \cite{ugrinovskii2014Round}. However, when consider the utilization of Round-Robin protocol into the communication among sub-controllers in large-scale systems, the results in above papers can not be adopted directly to obtain the distributed controller gains.

 In order to tackle the distributed $\mathcal{L}_{2}$-gain control problem for large-scale systems under Round-Robin communication protocol, this paper further develops the time-delay techniques thanks to a skillful partition of the time interval [0, T]. Then, based on matrix manipulations and Lyapunov stability theory, sufficient conditions are established in the form of linear matrix inequalities (LMIs) such that the closed-loop system is exponentially stable with a prescribed  $\mathcal{L}_{2}$-gain. The distributed controller gains can be obtained by solving a set of LMIs. Two numerical examples show that compared with the results in \cite{ghadami2013decomposition}, our control scheme leads to 50$\%$ bandwidth savings with a slight sacrifice of the $\mathcal{L}_{2}$-gain performance under different system parameters or different number of subsystems. The last
example illustrates that our developed theory is applied to the distributed $\mathcal{L}_{2}$-gain control problem for a large-scale system with $100$ heterogeneous subsystems.

\textbf{Notation}: The set of positive integers is denoted by $\mathbb{N}^{+}$, the $n$-dimensional Euclidean space is denoted by $ \mathbb{R}^{n}$, and $\mathcal{L}_{2}[0,\infty)$ denotes the Lebesgue space of $R^{n}$-valued vector functions defined on the time interval $[0,\infty)$. The notation ${\rm He}\{Y\}$ is a matrix defined by ${\rm He}\{Y\}=Y+Y^{\top}$. We write $A>B$ $(A\geq B)$ when $A-B$ is positive definite (positive semi-definite). In symmetric block matrices, the symbol ``*'' is used to represent the symmetric terms. All matrices and vectors are assumed to have compatible dimensions if they are not explicitly specified. The symbol ${\rm mod}(a,b)$ means the remainder when $a$ is divided by $b$.

  \section{Problem Formulation}
Consider a large-scale system with $N$ subsystems, the dynamics of the $i$th $(i\in \mathbb{N}:=\{1,2 ,\ldots, N\})$ subsystem is described as
\begin{align}\label{1}
\left\{
  \begin{array}{ll}
\dot{x}_{i}(t)=A_{ii}x_{i}(t)+ \mathop{\sum}\limits_{j\in \mathbb{N}_{i}}A_{ij}x_{j}(t)+B_{i}u_{i}(t)+E_{i}w_{i}(t),\\
z_{i}(t)=C_{i}x_{i}(t)+F_{i}w_{i}(t),
  \end{array}
\right.
\end{align}
where $x_{i}(t)\in \mathbb{R}^{n_{i}}, u_{i}(t)\in \mathbb{R}^{m_{i}}, w_{i}(t)\in \mathbb{R}^{p_{i}}$ and $z_{i}(t)\in \mathbb{R}^{q_{i}}$ denote, respectively, the state vector, the control input, the external disturbance and the performance output of the $i$th subsystem. We assume that $w_{i}(t)\in \mathcal{L}_{2}[0,\infty)$. The matrices $A_{ii},A_{ij},B_{i},E_{i},C_{i},F_{i}$ in \eqref{1} are known matrices with appropriate dimensions. In addition, the symbol $\mathbb{N}_{i}:=\{i_{1}, i_{2},...,i_{d_{i}}\}\subseteq \mathbb{N} $ means the ordered neighbor set of the $i$th subsystem, and $d_{i}$ denotes the cardinality of $\mathbb{N}_{i}$.

Here, each sub-controller uses local information continuously, but the interaction with neighbors is subject to Round-Robin communication protocol. In order to illustrate Round-Robin communication protocol precisely, we define a shift permutation operator $\Pi$ on the ordered neighbor set as
\begin{align}\label{3}
\Pi\{i_{1},...,i_{d_{i}-1},i_{d_{i}}\}=\{i_{d_{i}},i_{1},...,i_{d_{i}-1}\}.
\end{align}
Furthermore, the symbol $\Pi^{k}(\mathbb{N}_{i})$ denotes the set after using $k$-times consecutive shift permutations on $\mathbb{N}_{i}$ (The superscript $k$ in $\Pi^{k}(\mathbb{N}_{i})$ is omitted when $k=1$). In this set, we use $v_{j}^{k,i}\in\{1,...,d_{i}\}$ to denote the index of element $j$ in the permutation set $\Pi^{k}(\mathbb{N}_{i})$. To elucidate the notations, we give the following example.
\begin{figure}[bht]
\centering
\includegraphics[width=0.24\textwidth]{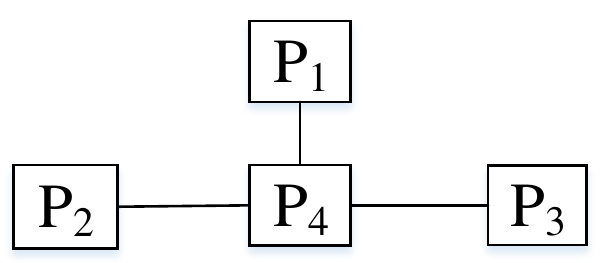}
\caption{Interconnection of a large-scale system with four subsystems. }
\label{figure98}
\end{figure}

\begin{example}
Suppose there is a large-scale system in Fig.~\ref{figure98} with $4$ subsystems, where $\mathbb{N}=\{1,2,3,4\}$ and $\mathbb{N}_{4}=\{1,2,3\}:=\{4_{1},4_{2},4_{3}\}$. The shift permutation operator $\Pi$ defined on $\mathbb{N}_{4}$ is given by:
\begin{align*}
  \Pi\{4_{1},4_{2},4_{3}\}=\{4_{3},4_{1},4_{2}\},\\
  \Pi^{2}\{4_{1},4_{2},4_{3}\}=\{4_{2},4_{3},4_{1}\}.
\end{align*}
In this case, one has
\begin{align*}
v_{1}^{1,4}&=2,~v_{2}^{1,4}=3,~v_{3}^{1,4}=1,\\
v_{1}^{2,4}&=3,~v_{2}^{2,4}=1,~v_{3}^{2,4}=2.
\end{align*}
\end{example}

 Round-Robin communication protocol can be described by first applying the operator $\Pi$ to the neighbor set at each instant
$t_{k}=k\Delta$ ($\Delta$ is a constant sampling period, $k=0,1,...)$, and then selecting the first element from the resulting permutation set $\Pi^{k}(\mathbb{N}_{i})$ for updating feedback. Information from the selected neighbor will be used and updated until this neighbor is polled next time. Information from the unselected neighbors remain constant. Therefore, there is a time interval between polling of the same neighbor, which is denoted by $\tau_{i}$ as
\begin{align} \label{102}
\tau_{i}=t_{k+d_{i}}-t_{k}=d_{i}\Delta
,~ i\in \mathbb{N}.
\end{align}

\begin{remark}
For the $i$th sub-controller, the symbol $\tau_{i}$ is considered as a time delay in communication with the same neighbor. Similar ideas can be seen in \cite{liu2012stability} for the analysis of stability and $\mathcal{L}_{2}$-gain for networked control systems, and in \cite{ugrinovskii2014Round} for distributed estimation problems.
\end{remark}

For $t\in[t_{k},t_{k+1})$, the distributed controller to be designed is of the following form:
\begin{align}\label{2}
u_{i}(t)=K_{ii}x_{i}(t)+\sum_{j\in\Pi^{k}(\mathbb{N}_{i})}K_{ij}x_{j}(t_{k-v_{j}^{k,i}+1}),~i\in \mathbb{N},
\end{align}
where $K_{ii}$ and $K_{ij}$ are controller gains to be designed.

\begin{remark}
Each sub-controller \eqref{2} generates its control input by using local information and information from neighbors, which is similar as those in \cite{chen2016distributed,massioni2009distributed,wu2014distributed}. However, unlike these references, only one neighbor is polled at each instant in \eqref{2} under Round-Robin communication protocol.
\end{remark}

Substituting \eqref{2} into \eqref{1} leads to the closed-loop system $(t\in[t_{k},t_{k+1}))$:
\begin{align}\label{4}
\left\{
  \begin{array}{ll}
 \dot{x}_{i}(t)=(A_{ii}+B_{i}K_{ii})x_{i}(t)+\sum\limits_{j\in \mathbb{N}_{i}}A_{ij}x_{j}(t)\\
\quad \quad \quad+B_{i}\sum\limits_{j\in\Pi^{k}(\mathbb{N}_{i})}K_{ij}x_{j}(t_{k-v_{j}^{k,i}+1})+E_{i}w_{i}(t)\\
 \quad~~~=\bar{A}_{ii}x_{i}(t)+\bar{A}_{ij} x_{i}^{c}(t)+\bar{K}_{ij}x_{i}^{d}(t)+E_{i}w_{i}(t), \\
z_{i}(t)=C_{i}x_{i}(t)+F_{i}w_{i}(t), ~i\in \mathbb{N},
  \end{array}
\right.
\end{align}
where
\begin{align*}
\bar{A}_{ii}&=A_{ii}+B_{i}K_{ii},~ \bar{A}_{ij}=[A_{ii_{1}}\cdots A_{ii_{d_{i}}}],\\
\bar{K}_{ij}&=B_{i}[K_{ii_{1}}\cdots K_{ii_{d_{i}}}],~
x_{i}^{c}(t)=[x_{i_{1}}^{\top}(t) \cdots x_{i_{d_{i}}}^{\top}(t)]^{\top},\\
x_{i}^{d}(t)&=[x_{i_{1}}^{\top}(t_{k-v_{i_{1}}^{k,i}+1}) \cdots x_{i_{d_{i}}}^{\top}(t_{k-v_{i_{d_{i}}}^{k,i}+1})]^{\top}.
\end{align*}

Our objective is to design the distributed controller \eqref{2} such that the following two requirements  are satisfied:
\begin{enumerate}[(i)]
  \item The closed-loop system \eqref{4} with $w_{i}(t)\equiv0$ $(i\in \mathbb{N})$ is exponentially stable;
  \item Under zero initial conditions, the closed-loop system \eqref{4} has a bounded $\mathcal{L}_{2}$-gain, i.e.,
\begin{equation}\label{6}
\sum_{i=1}^{N}\int_{0}^{\infty} z_{i}^{\top}(t)z_{i}(t)dt \leq\gamma^{2}\sum_{i=1}^{N}\int_{0}^{\infty}w_{i}^{\top}(t)w_{i}(t)dt,
\end{equation}
where $\gamma$ is the prescribed disturbance attenuation level.
\end{enumerate}

Throughout this paper, we will make the following assumption without loss of generality.
\begin{assumption}[\cite{lee2006sufficient,wang2007robust}]
The matrix $B_{i}$ $(i\in \mathbb{N})$ is of full column rank.
\end{assumption}

For each $B_{i}$ $(i\in \mathbb{N})$, there always exists an invertible matrix $T_{i}$ such that
\begin{align}\label{20}
T_{i}B_{i}=\begin{bmatrix}
I\\
0\end{bmatrix}.
\end{align}
The corresponding $T_{i}$ generally is not unique. A special $T_{i}$ can be obtained by
\begin{align*}
T_{i}=\begin{bmatrix}
(B_{i}^{\top}B_{i})^{-1}B_{i}^{\top}\\
[(B_{i}^{\top})^{\perp}]^{\top}
\end{bmatrix},
\end{align*}
where $(B_{i}^{\top})^{\perp}$ denotes a basis for the null space of $B_{i}^{\top}$.

Before proceeding further,  it is necessary to present the following three lemmas.
  \begin{lemma}[\cite{liu2012wirtinger}]
  \label{Lemma1}
For a given matrix $R\geq0$, any differentiable function $z(\cdot)$ in $[a,b]\rightarrow \mathbb{R}^{n}$ and $z(a)=0$, the following inequality holds:
\begin{align*}
\frac{\pi^{2}}{4}\int_{a}^{b}z^{\top}(s)Rz(s)ds\leq (b-a)^{2}\int_{a}^{b}\dot{z}^{\top}(s)R\dot{z}(s)ds.
\end{align*}
\end{lemma}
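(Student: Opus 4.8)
The plan is to peel off the matrix weight and the interval length, reducing everything to a scalar Wirtinger-type estimate on $[0,1]$, and then prove that by completing a square. First, since $R\ge0$ is symmetric I would factor $R=M^{\top}M$ (e.g.\ $M=R^{1/2}$) and set $y(s)=Mz(s)$. Then $z^{\top}Rz=|y|^{2}$, $\dot{z}^{\top}R\dot{z}=|\dot{y}|^{2}$, $y$ is differentiable, and $y(a)=Mz(a)=0$, so it suffices to show $\tfrac{\pi^{2}}{4}\int_{a}^{b}|y(s)|^{2}ds\le(b-a)^{2}\int_{a}^{b}|\dot{y}(s)|^{2}ds$. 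Summing over the components of $y$ reduces this to the scalar claim, and the affine substitution $s\mapsto(s-a)/(b-a)$ normalizes the interval, so the whole lemma comes down to: if $g:[0,1]\to\mathbb{R}$ is differentiable with $g(0)=0$, then $\tfrac{\pi^{2}}{4}\int_{0}^{1}g^{2}\,ds\le\int_{0}^{1}\dot{g}^{2}\,ds$.

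For this scalar inequality I would introduce the weight $\psi(s)=\sin(\tfrac{\pi}{2}s)$, chosen so that $\psi(0)=0$, $\psi'(1)=0$, and $\psi''=-\tfrac{\pi^{2}}{4}\psi$. Starting from $\int_{0}^{1}\bigl(\dot{g}-\tfrac{\psi'}{\psi}g\bigr)^{2}ds\ge0$, I would expand the square, rewrite the cross term as $\int_{0}^{1}\tfrac{\psi'}{\psi}g\dot{g}\,ds=\tfrac12\int_{0}^{1}\tfrac{\psi'}{\psi}(g^{2})'\,ds$ and integrate by parts, and then use the Riccati identity $\bigl(\tfrac{\psi'}{\psi}\bigr)'=\tfrac{\psi''}{\psi}-\bigl(\tfrac{\psi'}{\psi}\bigr)^{2}=-\tfrac{\pi^{2}}{4}-\bigl(\tfrac{\psi'}{\psi}\bigr)^{2}$. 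The two $\bigl(\tfrac{\psi'}{\psi}\bigr)^{2}$ terms cancel, the boundary terms vanish, and what is left is precisely $\int_{0}^{1}\dot{g}^{2}\,ds-\tfrac{\pi^{2}}{4}\int_{0}^{1}g^{2}\,ds\ge0$. An equivalent route is spectral: expand $g$ in the orthonormal basis $\{\sqrt{2}\sin((k+\tfrac12)\pi s)\}_{k\ge0}$ of $L^{2}[0,1]$, namely the eigenfunctions of $-d^{2}/ds^{2}$ with a Dirichlet condition at $0$ and a Neumann condition at $1$, whose eigenvalues $((k+\tfrac12)\pi)^{2}$ are all at least $\tfrac{\pi^{2}}{4}$; one identifies the cosine coefficients of $\dot{g}$ by integration by parts and then applies Parseval for $g$ together with Bessel for $\dot{g}$.

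The only real obstacle is the mild endpoint singularity: $\tfrac{\psi'}{\psi}=\tfrac{\pi}{2}\cot(\tfrac{\pi}{2}s)$ blows up like $1/s$ at $s=0$. This is harmless because $g(0)=0$ forces $g(s)=O(s)$, so $\tfrac{\psi'}{\psi}g$ stays bounded, the integral $\int_{0}^{1}(\psi'/\psi)^{2}g^{2}\,ds$ converges, and the boundary term $[(\psi'/\psi)g^{2}]_{0}^{1}$ vanishes at both ends (at $s=0$ because it is $O(s)$, at $s=1$ because $\psi'(1)=0$). To be fully rigorous I would carry out the computation on $[\varepsilon,1]$ and let $\varepsilon\downarrow0$. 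Beyond that the argument is routine, and the constant $\tfrac{\pi^{2}}{4}$ is sharp, being attained in the limit by $g=\psi$.
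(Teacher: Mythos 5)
Your argument is correct. Note first that the paper itself gives no proof of this lemma: it is imported verbatim from \cite{liu2012wirtinger}, so there is no in-paper derivation to compare against, and your proposal supplies a genuinely self-contained proof of the cited fact. The reduction to a scalar inequality on $[0,1]$ via $y=R^{1/2}z$ and an affine change of variable is clean, and the completing-the-square argument with the weight $\psi(s)=\sin(\tfrac{\pi}{2}s)$ is the standard Sturm--Liouville/Riccati route to this ``quarter-period'' Wirtinger inequality: the Dirichlet condition $\psi(0)=0$ matches the hypothesis $z(a)=0$, the Neumann condition $\psi'(1)=0$ kills the right-hand boundary term, and together they explain why the constant is $\pi^{2}/4$ rather than the $\pi^{2}$ of the two-endpoint version. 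You also correctly isolated and resolved the only delicate point, namely the $1/s$ blow-up of $\psi'/\psi$ at the left endpoint: differentiability plus $g(0)=0$ gives $g(s)=O(s)$, so the boundary contribution $(\psi'/\psi)g^{2}\big|_{s=\varepsilon}=O(\varepsilon)$ vanishes in the limit and the singular integrals converge. The spectral alternative you sketch (eigenfunctions $\sin\bigl((k+\tfrac12)\pi s\bigr)$ with eigenvalues at least $\pi^{2}/4$, Parseval for $g$ and Bessel for $\dot g$) is an equally valid second route; both are classical. The only caveat worth recording is the trivial one you implicitly use: if $\dot g\notin L^{2}$ the right-hand side is $+\infty$ and the inequality is vacuous, so one may assume finiteness throughout.
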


 \begin{lemma}[\cite{seuret2015stability}]
  \label{Lemma2}
 For a given matrix $R\geq0,$ any differentiable function $x(\cdot)$ in $[a,b]\rightarrow \mathbb{R}^{n},$ the following inequality holds:
\begin{align*}
\int_{a}^{b}\dot{x}^{\top}(s)R\dot{x}(s)ds\geq\frac{1}{b-a}\begin{bmatrix}\Upsilon_{0}\\
\Upsilon_{1}\end{bmatrix}^{\top}\begin{bmatrix}R& \mathbf{0}\\
\mathbf{0 }& 3R\end{bmatrix}\begin{bmatrix}\Upsilon_{0}\\
\Upsilon_{1}\end{bmatrix},
\end{align*}
where
\begin{align*}
\Upsilon_{0}&=x(b)-x(a),\\
\Upsilon_{1}&=x(b)+x(a)-\frac{2}{b-a}\int_{a}^{b}x(s)ds.
\end{align*}
\end{lemma}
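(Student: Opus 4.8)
The plan is to derive the bound by a single completion-of-squares estimate built on the first two shifted Legendre polynomials on $[a,b]$. Set $p_{0}(s)=1$ and $p_{1}(s)=\frac{2s-(a+b)}{b-a}$; these are orthogonal on $[a,b]$, with $\int_{a}^{b}p_{0}p_{1}\,ds=0$, $\int_{a}^{b}p_{0}^{2}\,ds=b-a$, $\int_{a}^{b}p_{1}^{2}\,ds=\frac{b-a}{3}$, and they satisfy $p_{1}(a)=-1$, $p_{1}(b)=1$, $p_{1}'(s)=\frac{2}{b-a}$. First I would identify the two ``Legendre moments'' of $\dot x$ with the data in the statement: a direct integration gives $\int_{a}^{b}p_{0}(s)\dot x(s)\,ds=x(b)-x(a)=\Upsilon_{0}$, and one integration by parts gives $\int_{a}^{b}p_{1}(s)\dot x(s)\,ds=x(b)+x(a)-\frac{2}{b-a}\int_{a}^{b}x(s)\,ds=\Upsilon_{1}$. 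So the quantities $\Upsilon_{0},\Upsilon_{1}$ appearing in the bound are, up to normalisation, exactly the projections of $\dot x$ onto $p_{0}$ and $p_{1}$.

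Next, I would introduce the residual function
\begin{align*}
y(s)=\dot x(s)-\frac{1}{b-a}\,p_{0}(s)\,\Upsilon_{0}-\frac{3}{b-a}\,p_{1}(s)\,\Upsilon_{1},
\end{align*}
that is, $\dot x$ minus its Legendre projection onto polynomials of degree at most one (the weights $\frac{1}{b-a}$ and $\frac{3}{b-a}$ are the reciprocals of $\|p_{0}\|_{2}^{2}$ and $\|p_{1}\|_{2}^{2}$). Since $R\geq0$ we have $\int_{a}^{b}y^{\top}(s)Ry(s)\,ds\geq0$. Expanding the square and using the moment identities above together with the orthogonality relations, the mixed term equals $-\frac{2}{b-a}\big(\Upsilon_{0}^{\top}R\Upsilon_{0}+3\,\Upsilon_{1}^{\top}R\Upsilon_{1}\big)$ while the pure-approximation term equals $+\frac{1}{b-a}\big(\Upsilon_{0}^{\top}R\Upsilon_{0}+3\,\Upsilon_{1}^{\top}R\Upsilon_{1}\big)$, so that
\begin{align*}
0\leq\int_{a}^{b}\dot x^{\top}(s)R\dot x(s)\,ds-\frac{1}{b-a}\big(\Upsilon_{0}^{\top}R\Upsilon_{0}+3\,\Upsilon_{1}^{\top}R\Upsilon_{1}\big),
\end{align*}
which is precisely the asserted block-matrix inequality once $\Upsilon_{0}^{\top}R\Upsilon_{0}+3\,\Upsilon_{1}^{\top}R\Upsilon_{1}$ is rewritten as the quadratic form in the statement.

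There is no genuinely hard step here; the work is entirely bookkeeping, and the one thing to get right is the choice of the normalising constants $1$ and $3$ in $y(s)$, which is exactly what makes the cross term and the square term collapse to a single negative copy of the bound. Two minor technical points deserve a remark: $R$ need only be positive semidefinite, since the argument uses nothing beyond $\int_{a}^{b}y^{\top}Ry\,ds\geq0$; and the integration by parts producing $\Upsilon_{1}$ requires $x$ to be absolutely continuous with integrable derivative, which the differentiability hypothesis supplies. As an alternative, one could instead invoke Lemma~\ref{Lemma1} twice---first on a primitive of $\dot x$ corrected to vanish at $a$, recovering the $\Upsilon_{0}$ term, then on a further-corrected primitive for the $\Upsilon_{1}$ term---but the direct Legendre projection above is shorter and delivers both terms simultaneously.
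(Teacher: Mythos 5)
Your proof is correct: the Legendre moments, the normalising constants $1/(b-a)$ and $3/(b-a)$, and the expansion of $\int_a^b y^{\top}(s)Ry(s)\,ds\geq 0$ all check out, and they yield exactly the stated bound. The paper itself gives no proof of Lemma~\ref{Lemma2} --- it is imported by citation from \cite{seuret2015stability} --- and your completion-of-squares argument via projection onto the first two shifted Legendre polynomials is essentially the standard derivation of this Wirtinger/Bessel--Legendre integral inequality in that literature, so there is nothing further to reconcile.
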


\begin{lemma}[\cite{ugrinovskii2014Round}]
\label{Lemma3}
Consider a vector $\delta_{i}(t)=[(\delta_{i}^{0})^{\top}(t),...,(\delta_{i}^{d_{i}})^{\top}(t)]^{\top}$ $(i\in \mathbb{N})$ with $d_{i}\in \mathbb{N}^{+}$, if there exist matrices $\hat{R}_{i}$ $(i\in \mathbb{N})$ and $G_{i}$ with compatible dimensions such that
\begin{align}\label{5}
\begin{bmatrix}
\hat{R}_{i} & G_{i}\\
G_{i}^{\top}  & \hat{R}_{i}
\end{bmatrix}\geq0,
\end{align}
then
\begin{enumerate}[(i)]
  \item $\tau_{i}\Big[\frac{1}{t-t_{k}}(\delta_{i}^{0})^{\top}(t)\hat{R}_{i}\delta_{i}^{0}(t)\\
+\sum_{v=1}^{d_{i}-1}\frac{1}{t_{k-v+1}-t_{k-v}}(\delta_{i}^{v})^{\top}(t)\hat{R}_{i}\delta_{i}^{v}(t) \\
+\frac{1}{t_{k-d_{i}+1}-t+\tau_{i}}(\delta_{i}^{d_{i}})^{\top}(t)\hat{R}_{i}\delta_{i}^{d_{i}}(t)\Big]\\
\geq\delta_{i}^{\top}(t)\Psi_{i}\delta_{i}(t), \quad d_{i}\geq2$;\\
  \item $\tau_{i}\Big[\frac{1}{t-t_{k}}(\delta_{i}^{0})^{\top}(t)\hat{R}_{i}\delta_{i}^{0}(t)
+\frac{1}{t_{k}-t+\tau_{i}}(\delta_{i}^{1})^{\top}(t)\hat{R}_{i}\delta_{i}^{1}(t)\Big]\\
\geq\delta_{i}^{\top}(t)\Psi_{i}\delta_{i}(t), \quad d_{i}=1$;
\end{enumerate}
where
\begin{align}\label{eq1}
\Psi_{i}= \left\{
            \begin{array}{ll}
              \begin{bmatrix}
\hat{R}_{i} & \frac{1}{2}(G_{i}+G_{i}^{\top}) & \cdots & \frac{1}{2}(G_{i}+G_{i}^{\top}) \\
* & \hat{R}_{i} & \cdots & \frac{1}{2}(G_{i}+G_{i}^{\top}) \\
\vdots & \vdots & \ddots & \vdots\\
* & * & \cdots & \hat{R}_{i}
\end{bmatrix}, & d_{i}\geq2; \\
          \begin{bmatrix}
\hat{R}_{i} & \frac{1}{2}(G_{i}+G_{i}^{\top}) \\
* & \hat{R}_{i}
\end{bmatrix}, & d_{i}=1.
            \end{array}
          \right.
\end{align}
\end{lemma}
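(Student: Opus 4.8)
The plan is to read the claim as a multi-term ``reciprocally convex'' inequality in the spirit of Park et al., and to reduce it pair-by-pair to the $2\times 2$ block inequality \eqref{5}.

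First I would pass to normalised weights. Fix $t\in(t_k,t_{k+1})$ and write the $d_i+1$ denominators appearing on the left-hand side as $\ell_0=t-t_k$, $\ell_v=t_{k-v+1}-t_{k-v}$ for $v=1,\dots,d_i-1$, and $\ell_{d_i}=t_{k-d_i+1}-t+\tau_i$. Telescoping the middle terms gives $\sum_{v=1}^{d_i-1}\ell_v=t_k-t_{k-d_i+1}$, hence $\sum_{m=0}^{d_i}\ell_m=(t-t_k)+(t_k-t_{k-d_i+1})+(t_{k-d_i+1}-t+\tau_i)=\tau_i$, and every $\ell_m>0$. Setting $\lambda_m:=\ell_m/\tau_i\in(0,1)$, so that $\sum_{m=0}^{d_i}\lambda_m=1$, the left-hand side of the assertion is exactly $\sum_{m=0}^{d_i}\lambda_m^{-1}(\delta_i^m)^\top\hat R_i\delta_i^m$; and, with $\tilde G_i:=\tfrac12(G_i+G_i^\top)$ and the definition \eqref{eq1} of $\Psi_i$, the right-hand side is $\sum_{m=0}^{d_i}(\delta_i^m)^\top\hat R_i\delta_i^m+2\sum_{0\le m<n\le d_i}(\delta_i^m)^\top\tilde G_i\delta_i^n$.

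Next I would decompose the difference of the two sides over unordered pairs. From \eqref{5} one obtains $\bigl[\begin{smallmatrix}\hat R_i&\tilde G_i\\ \tilde G_i&\hat R_i\end{smallmatrix}\bigr]\ge0$, since $\bigl[\begin{smallmatrix}\hat R_i&G_i^\top\\ G_i&\hat R_i\end{smallmatrix}\bigr]$ is the permutation-congruence of \eqref{5} by $\bigl[\begin{smallmatrix}0&I\\ I&0\end{smallmatrix}\bigr]$ and averaging the two yields the symmetrised block matrix. Using $1-\lambda_m=\sum_{n\ne m}\lambda_n$ to redistribute the diagonal terms, the difference $D$ of the left- and right-hand sides becomes $D=\sum_{0\le m<n\le d_i}\zeta_{mn}^\top\bigl[\begin{smallmatrix}\hat R_i&\tilde G_i\\ \tilde G_i&\hat R_i\end{smallmatrix}\bigr]\zeta_{mn}$, where $\zeta_{mn}:=\bigl[\sqrt{\lambda_n/\lambda_m}\,(\delta_i^m)^\top,\ -\sqrt{\lambda_m/\lambda_n}\,(\delta_i^n)^\top\bigr]^\top$: indeed the diagonal blocks of the quadratic form reproduce $\tfrac{\lambda_n}{\lambda_m}(\delta_i^m)^\top\hat R_i\delta_i^m+\tfrac{\lambda_m}{\lambda_n}(\delta_i^n)^\top\hat R_i\delta_i^n$ and the cross block reproduces $-2(\delta_i^m)^\top\tilde G_i\delta_i^n$. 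Each summand is nonnegative by the block inequality just recorded, so $D\ge0$, which is part (i); part (ii) is the single-pair instance $\{0,1\}$ of the same identity (the classical reciprocally convex combination lemma).

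I expect the only delicate point to be the matching of the cross terms: because $\Psi_i$ contains the symmetrised $\tfrac12(G_i+G_i^\top)$ rather than $G_i$ itself, one should first replace \eqref{5} by $\bigl[\begin{smallmatrix}\hat R_i&\tilde G_i\\ \tilde G_i&\hat R_i\end{smallmatrix}\bigr]\ge0$ and then choose the sign of the lower block of $\zeta_{mn}$ so that the $+2a^\top\tilde G_i b$ produced by the quadratic form cancels against the $-2(\delta_i^m)^\top\tilde G_i\delta_i^n$ appearing in $D$; everything else is index bookkeeping and telescoping. A minor caveat is the endpoint $t=t_k$ (where $\ell_0=0$): it is excluded above and is immaterial when the lemma is applied inside the $\mathcal{L}_{2}$ estimate, since it affects only a set of measure zero.
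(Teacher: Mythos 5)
Your proof is correct. Note that the paper itself offers no proof of this lemma: it is imported by citation from the Ugrinovskii--Fridman reference, so there is no ``paper's approach'' to compare against line by line. Your argument is a clean, self-contained derivation of the multi-term reciprocally convex combination inequality. The key steps all check out: the denominators do telescope to $\tau_i$, so the left-hand side is $\sum_m \lambda_m^{-1}(\delta_i^m)^\top\hat R_i\delta_i^m$ with $\sum_m\lambda_m=1$; the symmetrised block matrix with $\tilde G_i=\tfrac12(G_i+G_i^\top)$ off the diagonal is indeed positive semidefinite, being the average of \eqref{5} and its permutation congruence; the identity $\lambda_m^{-1}-1=\sum_{n\ne m}\lambda_n/\lambda_m$ correctly redistributes the surplus diagonal mass over unordered pairs; and each pairwise bracket is exactly $\zeta_{mn}^\top\bigl[\begin{smallmatrix}\hat R_i&\tilde G_i\\ \tilde G_i&\hat R_i\end{smallmatrix}\bigr]\zeta_{mn}\ge0$ with your choice of signs and weights (the cross term $2a^\top\tilde G_i b$ evaluates to $-2(\delta_i^m)^\top\tilde G_i\delta_i^n$, cancelling the off-diagonal contribution of $\Psi_i$). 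Your caveat about $t=t_k$ is also apposite, since only $\ell_0$ can vanish while $\ell_{d_i}=\Delta>0$ there, and the lemma is only invoked for the integrand on $[t_k,t_{k+1})$. This is precisely the standard route one would take to prove the cited result, so there is nothing to flag.
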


\section{Distributed Networked Controller Design}
In this section, distributed networked controllers will be designed for achieving stability and the
prescribed $\mathcal{L}_{2}$-gain of the closed-loop system \eqref{4}.
\begin{theorem}\label{Theorem1}
Given positive constants $\Delta,\gamma,\alpha_{i}$ $(i\in \mathbb{N})$, $h_{i}$ satisfying  $0<h_{i}<2\alpha_{i}d_{i}^{-1}$. If there exist matrices $W_{i}>0$ $(i\in \mathbb{N})$ and Lyapunov function $V_{s}(t)=\mathop{\sum}\limits_{i=1}^{N}V_{i}(t)$ such that
 \begin{align}\label{8}
&\nonumber\dot{V}_{i}(t)+2\alpha_{i}V_{i}(t)-\sum_{j\in \mathbb{N}_{i}}h_{j}V_{j}(t)+\sum_{j\in \mathbb{N}_{i}}\tau_{j}^{2}\dot{x}_{i}^{\top}(t)W_{i}\dot{x}_{i}(t)\\
\nonumber&-\sum_{j\in \mathbb{N}_{i}}\mu_{j}(x_{j}(t),x_{j}(t_{k-v_{j}^{k,i}+1}))\\
&+z_{i}^{\top}(t)z_{i}(t)-\gamma^{2}w_{i}^{\top}(t)w_{i}(t) \leq0, ~~t\in[t_{k},t_{k+1}),
\end{align}
hold for all $i\in \mathbb{N}$, where $\tau_{j}=d_{j}\Delta$, and
\begin{align}
\nonumber&\mu_{j}(x_{j}(t),x_{j}(t_{k-v_{j}^{k,i}+1}))\\
\label{111}=&\frac{\pi^{2}}{4}(x_{j}(t)-x_{j}(t_{k-v_{j}^{k,i}+1}))^{\top}W_{j}(x_{j}(t)-x_{j}(t_{k-v_{j}^{k,i}+1})),
\end{align}
then the closed-loop system \eqref{4} is exponentially stable and has $\mathcal{L}_{2}$-gain less than $\gamma$.
\end{theorem}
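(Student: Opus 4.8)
The plan is to aggregate the $N$ per-subsystem inequalities \eqref{8} into a single network-level dissipation inequality and then extract both claims by the usual Lyapunov--Krasovskii reasoning. First I would sum \eqref{8} over $i\in\mathbb{N}$. The only terms that couple distinct subsystems are the double sums over neighbour sets, and since the interconnection graph is undirected, each index $j$ lies in exactly $d_j$ of the sets $\mathbb{N}_i$; interchanging the order of summation therefore gives $\sum_i\sum_{j\in\mathbb{N}_i}h_jV_j(t)=\sum_j d_jh_jV_j(t)$, so that $\sum_i 2\alpha_iV_i(t)-\sum_i\sum_{j\in\mathbb{N}_i}h_jV_j(t)=\sum_j(2\alpha_j-d_jh_j)V_j(t)\ge\beta V_s(t)$ with $\beta:=\min_j(2\alpha_j-d_jh_j)>0$; this is exactly where the hypothesis $0<h_j<2\alpha_jd_j^{-1}$ is used.

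The remaining coupling terms $\sum_i\sum_{j\in\mathbb{N}_i}\bigl[\tau_j^{2}\dot x_i^{\top}(t)W_i\dot x_i(t)-\mu_j(x_j(t),x_j(t_{k-v_j^{k,i}+1}))\bigr]$ must be matched to the derivative of an auxiliary Lyapunov--Krasovskii functional. Using the symmetry of the graph to exchange the roles of $i$ and $j$, one has $\sum_i\sum_{j\in\mathbb{N}_i}\tau_j^{2}\dot x_i^{\top}W_i\dot x_i=\sum_i\sum_{j\in\mathbb{N}_i}\tau_i^{2}\dot x_j^{\top}W_j\dot x_j$, and then the displayed quantity equals the time-derivative of
\[
U(t)=\sum_{i}\sum_{j\in\mathbb{N}_i}\!\left[\tau_i^{2}\!\int_{t_{k-v_j^{k,i}+1}}^{t}\!\!\dot x_j^{\top}(s)W_j\dot x_j(s)\,ds-\frac{\pi^{2}}{4}\!\int_{t_{k-v_j^{k,i}+1}}^{t}\!\!\bigl(x_j(s)-x_j(t_{k-v_j^{k,i}+1})\bigr)^{\!\top}\! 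W_j\bigl(x_j(s)-x_j(t_{k-v_j^{k,i}+1})\bigr)ds\right],
\]
since each lower limit $t_{k-v_j^{k,i}+1}$ is the latest instant at which neighbour $j$ was polled by controller $i$, hence is piecewise constant in $t$ and jumps forward by $\tau_i$ whenever $j$ is re-polled. Because $t-t_{k-v_j^{k,i}+1}<\tau_i$ on each interval, Lemma~\ref{Lemma1} (applied with $R=W_j$ and $z(s)=x_j(s)-x_j(t_{k-v_j^{k,i}+1})$, which vanishes at the left end point) gives $U(t)\ge0$, and at each sampling instant $U$ resets to a value no larger than its left limit, so $\mathbf{V}(t):=V_s(t)+U(t)$ is nonnegative and does not jump up at the instants $t_k$. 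I expect this step --- isolating $U$, verifying the ``gap $<\tau_i$'' property hidden in the indices $v_j^{k,i}$, and checking the reset behaviour --- to be the main obstacle; it is the concrete form of the ``skillful partition of $[0,T]$'' mentioned in the introduction.

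Combining the two steps, \eqref{8} summed over $i$ yields, on each $[t_k,t_{k+1})$, the inequality
\[
\dot{\mathbf{V}}(t)+\beta V_s(t)+\sum_{i=1}^{N}z_i^{\top}(t)z_i(t)-\gamma^{2}\sum_{i=1}^{N}w_i^{\top}(t)w_i(t)\le 0 .
\]
For the $\mathcal{L}_2$-gain I would integrate this over $[0,T]$, summing over the subintervals and using that $\mathbf{V}$ does not jump up at the $t_k$; under zero initial conditions $\mathbf{V}(0)=0$, so dropping $\mathbf{V}(T)\ge0$ and $\beta V_s\ge0$ leaves $\sum_i\int_0^{T}z_i^{\top}z_i\le\gamma^{2}\sum_i\int_0^{T}w_i^{\top}w_i$, and \eqref{6} follows on letting $T\to\infty$. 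For exponential stability I would put $w_i\equiv0$, so that $\dot{\mathbf{V}}(t)\le-\beta V_s(t)\le0$ between samplings and $\mathbf{V}$ is non-increasing; since $V_s(t)$ bounds below a positive multiple of $\sum_i\|x_i(t)\|^{2}$ while $U(t)$ is bounded above by a supremum of $\sum_i\|x_i(\cdot)\|^{2}$ over a window of length $\tau_{\max}=\Delta\max_i d_i$, the standard comparison argument for Lyapunov--Krasovskii functionals with bounded delay then gives $\mathbf{V}(t)\le M e^{-2\lambda t}\mathbf{V}(0)$ for some constants $M,\lambda>0$, hence exponential decay of $x(t)$.
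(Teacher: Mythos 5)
Your proposal is correct, and the skeleton matches the paper's: summing \eqref{8} over $i$, swapping the order of the double sums (which, as you note, tacitly requires $j\in\mathbb{N}_i\Leftrightarrow i\in\mathbb{N}_j$, exactly as in \eqref{9} and \eqref{19}), and extracting $\varepsilon=\min_i(2\alpha_i-d_ih_i)>0$ from the hypothesis $0<h_i<2\alpha_i d_i^{-1}$. Where you genuinely diverge is in the treatment of the coupling terms $\sum_i\sum_{j\in\mathbb{N}_i}\bigl[\tau_j^2\dot x_i^\top W_i\dot x_i-\mu_j\bigr]$. The paper first integrates the summed inequality over $[0,T]$ and then proves the \emph{integral} comparison $\int_0^T\mu_j\,dt\le\tau_i^2\int_0^T\dot x_j^\top W_j\dot x_j\,dt$ by partitioning $[0,T]$ into full Round-Robin periods $I_2$ plus boundary pieces $I_1,I_3$, applying Lemma~\ref{Lemma1} on each piece where the held sample $x_j(t_{k-v_j^{k,i}+1})$ is constant (equations \eqref{13}--\eqref{17}). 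You instead package the same information into an auxiliary functional $U(t)$ whose piecewise derivative reproduces the coupling terms, prove $U\ge0$ from the gap bound $t-t_{k-v_j^{k,i}+1}<\tau_i$ via the same Lemma~\ref{Lemma1}, and verify that $U$ only jumps downward at polling instants; integrating $\dot{\mathbf V}+\beta V_s+\sum_i(z_i^\top z_i-\gamma^2 w_i^\top w_i)\le0$ then closes the argument. The two routes are equivalent in content --- your jump analysis at the instants where $v_j^{k,i}$ resets to $1$ is precisely the paper's period-by-period partition in disguise --- but yours yields a pointwise dissipation inequality for $\mathbf V=V_s+U$ and makes the "skillful partition" transparent as monotonicity of a Krasovskii functional, while the paper's version avoids introducing $U$ at the cost of the bookkeeping in Fig.~2. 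Two minor imprecisions on your side, neither fatal: $U$ contains $\int\dot x_j^\top W_j\dot x_j\,ds$, so bounding it by a windowed supremum of $\sum_i\|x_i\|^2$ requires passing through the closed-loop dynamics to dominate $\|\dot x_j\|$ by delayed states; and the exponential decay rate is obtained in the paper directly from the integral form of Gr\"onwall's inequality applied to \eqref{18}--\eqref{19} (yielding \eqref{920}), which is slightly more economical than invoking a general comparison lemma.
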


\begin{proof}
By changing the order of summation in the fourth term in \eqref{8}, we further obtain that
\begin{align}\label{9}
\sum_{i=1}^{N}\sum_{j\in \mathbb{N}_{i}}\tau_{j}^{2}\dot{x}_{i}^{\top}(t)W_{i}\dot{x}_{i}(t)=\sum_{i=1}^{N}\sum_{j\in \mathbb{N}_{i}}\tau_{i}^{2}\dot{x}_{j}^{\top}(t)W_{j}\dot{x}_{j}(t).
\end{align}

Summing up both sides of \eqref{8} from $i=1$ to $N$ and noting the equation \eqref{9}, one has
\begin{align}\label{11}
&\nonumber\sum_{i=1}^{N}\Big[\dot{V_{i}}(t)+2\alpha_{i}V_{i}(t)-\sum_{j\in \mathbb{N}_{i}}h_{j}V_{j}(t)\\
&\nonumber \quad~~+z_{i}^{\top}(t)z_{i}(t)-\gamma^{2}w_{i}^{\top}(t)w_{i}(t)\Big]\\
\leq&\sum_{i=1}^{N}\sum_{j\in \mathbb{N}_{i}}\Big[\mu_{j}(x_{j}(t),x_{j}(t_{k-v_{j}^{k,i}+1}))-\tau_{i}^{2}\dot{x}_{j}^{\top}(t)W_{j}\dot{x}_{j}(t)\Big].
\end{align}

Then, integrating both sides of \eqref{11} from $t=0$ to $T$ leads to
\begin{align}\label{12}
&\nonumber\sum_{i=1}^{N}\int_{0}^{T}(\dot{V_{i}}(t)+2\alpha_{i}V_{i}(t)-\sum_{j\in \mathbb{N}_{i}}h_{j}V_{j}(t))dt\\
\nonumber&+\sum_{i=1}^{N}\int_{0}^{T}(z_{i}^{\top}(t)z_{i}(t)-\gamma^{2}w_{i}^{\top}(t)w_{i}(t))dt\\
\leq&\nonumber\sum_{i=1}^{N}\sum_{j\in \mathbb{N}_{i}}\Big[\int_{0}^{T}\mu_{j}(x_{j}(t),x_{j}(t_{k-v_{j}^{k,i}+1}))dt\\
&\quad \quad \quad \quad -\tau_{i}^{2}\int_{0}^{T}\dot{x}_{j}^{\top}(t)W_{j}\dot{x}_{j}(t)dt\Big].
\end{align}

Assume that $T\in[t_{m},t_{m+1})$, and $m$ is sufficient large such that the inequality $m-ld_{i}-v_{j}^{m,i}+1\geq0$ has positive integer solutions. Let $l$ be the largest positive integer among all solutions. Then, the partition of the interval $[0,T]$ is illustrated in Fig.~\ref{figure9}. Here, the symbol $v_{j}^{m,i}$ denotes the index of $j$ in the permutation set $\Pi^{m}(\mathbb{N}_{i})$.
\begin{figure}[bht]
\centering
\includegraphics[width=0.4\textwidth]{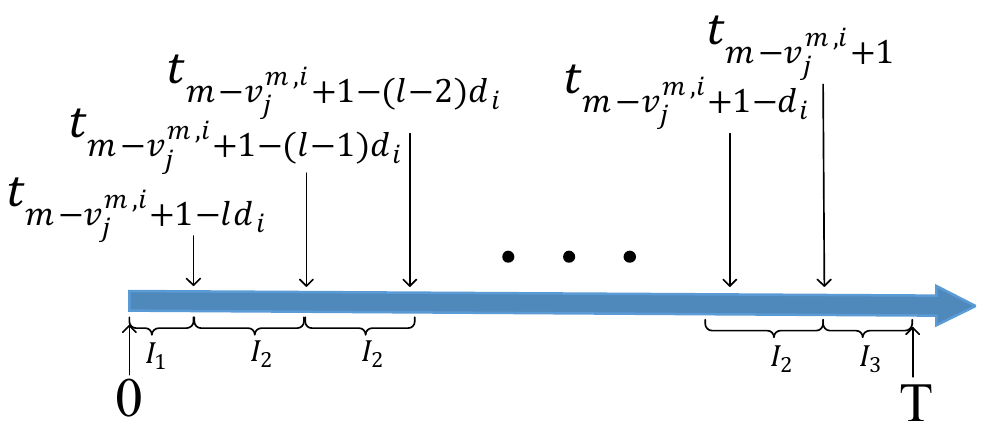}
\caption{The partition of the time interval [0,T].}
\label{figure9}
\end{figure}\\

To be more precisely, the time interval $[0,T]$ is partitioned into the following subintervals:
\begin{align}
\nonumber[0,T]=&[0,t_{m-v_{j}^{m,i}+1-ld_{i}})\\
\nonumber&\cup [t_{m-v_{j}^{m,i}+1-ld_{i}},t_{m-v_{j}^{m,i}+1-(l-1)d_{i}})\\
\nonumber&\cup [t_{m-v_{j}^{m,i}+1-(l-1)d_{i}},t_{m-v_{j}^{m,i}+1-(l-2)d_{i}})\cdots\\
\nonumber &\cup [t_{m-v_{j}^{m,i}+1-d_{i}},t_{m-v_{j}^{m,i}+1})\cup [t_{m-v_{j}^{m,i}+1},T]\\
\nonumber=&\underbrace{[0,t_{m-v_{j}^{m,i}+1-ld_{i}})}_{I_{1}}\\
\nonumber&\cup\large(\cup_{d=1}^{l}\underbrace{[t_{m-v_{j}^{m,i}+1-dd_{i}},t_{m-v_{j}^{m,i}+1-(d-1)d_{i}})}_{I_{2}}\large)\\
\nonumber&\cup\underbrace{[t_{m-v_{j}^{m,i}+1},T]}_{I_{3}}.
\end{align}

It follows from the above partition that the first term in the right hand side of \eqref{12} can be written as:
\begin{align}\label{13}
\nonumber&\int_{0}^{T}\mu_{j}(x_{j}(t),x_{j}(t_{k-v_{j}^{k,i}+1}))dt\\
\nonumber=&\int_{0}^{t_{m-v_{j}^{m,i}+1-ld_{i}}}\mu_{j}(x_{j}(t),x_{j}(t_{k-v_{j}^{k,i}+1}))dt\\
\nonumber &+\sum_{d=1}^{l}\int_{t_{m-v_{j}^{m,i}+1-dd_{i}}}^{t_{m-v_{j}^{m,i}+1-(d-1)d_{i}}}\mu_{j}(x_{j}(t),x_{j}(t_{k-v_{j}^{k,i}+1}))dt\\
&+\int_{t_{m-v_{j}^{m,i}+1}}^{T}\mu_{j}(x_{j}(t),x_{j}(t_{k-v_{j}^{k,i}+1}))dt.
\end{align}

Note that under Round-Robin communication protocol, each sub-controller requires its neighbor's state information periodically. Therefore, if the $i$th sub-controller polls the $j$th $(j\in \mathbb{N}_{i})$ sub-controller at time $t_{m-v_{j}^{m,i}+1-dd_{i}}$ $(d\in \{1,2,...,l\})$, the next time the same neighbor will be polled at $t_{m-v_{j}^{m,i}+1-(d-1)d_{i}}$. During the time interval $(t_{m-v_{j}^{m,i}+1-dd_{i}},t_{m-v_{j}^{m,i}+1-(d-1)d_{i}})$, the $i$th sub-controller requires the other neighbor's information at instants, information from the $j$th sub-controller remain constant. That is, for $t_{k-v_{j}^{k,i}+1}\in (t_{m-v_{j}^{m,i}+1-dd_{i}},t_{m-v_{j}^{m,i}+1-(d-1)d_{i}})$, one has
\begin{align*}
x_{j}(t_{k-v_{j}^{k,i}+1})\equiv x_{j}(t_{m-v_{j}^{m,i}+1-dd_{i}}),
\end{align*}
which means that
\begin{align}\label{914}
 \mu_{j}(x_{j}(t),x_{j}(t_{k-v_{j}^{k,i}+1})) \equiv \mu_{j}(x_{j}(t),x_{j}(t_{m-v_{j}^{m,i}+1-dd_{i}})).
\end{align}

Consider the second integrating term in the right hand side of \eqref{13} $(d\in \{1,2,...,l\})$, one has
\begin{align}\label{14}
\nonumber&\int_{t_{m-v_{j}^{m,i}+1-dd_{i}}}^{t_{m-v_{j}^{m,i}+1-(d-1)d_{i}}}\mu_{j}(x_{j}(t),x_{j}(t_{k-v_{j}^{k,i}+1}))dt\\
\nonumber=&\int_{t_{m-v_{j}^{m,i}+1-dd_{i}}}^{t_{m-v_{j}^{m,i}+1-(d-1)d_{i}}}\mu_{j}(x_{j}(t),x_{j}(t_{m-v_{j}^{m,i}+1-dd_{i}}))dt\\
\nonumber\leq&\int_{t_{m-v_{j}^{m,i}+1-dd_{i}}}^{t_{m-v_{j}^{m,i}+1-(d-1)d_{i}}}
(d_{i}\Delta)^{2}
\dot{x}_{j}^{\top}(t)W_{j}\dot{x}_{j}(t) dt\\
=&\tau_{i}^{2}\int_{t_{m-v_{j}^{m,i}+1-dd_{i}}}^{t_{m-v_{j}^{m,i}+1-(d-1)d_{i}}}
\dot{x}_{j}^{\top}(t)W_{j}\dot{x}_{j}(t) dt.
\end{align}
Here, the first ``$=$'' holds because of \eqref{914}, the second ``='' holds due to \eqref{102}. Note that the value of $x_{j}(t_{m-v_{j}^{m,i}+1-dd_{i}})$ is a constant over the integral interval, which implies that $\dot{x}_{j}(t_{m-v_{j}^{m,i}+1-dd_{i}})=0$, therefore, the first ``$\leq$'' in \eqref{14} holds because of Lemma \ref{Lemma1}.

It follows from the partition of $[0,T]$ that the interval of $I_{1}=[0,t_{m-v_{j}^{m,i}+1-ld_{i}})$ and $I_{3}=[t_{m-v_{j}^{m,i}+1},T]$ are less than a period $d_{i}\Delta$.
 Therefore, the $j$th $(j\in \mathbb{N}_{i})$ sub-controller can not be polled by the $i$th sub-controller more than one time over the interval $I_{1}$ or $I_{3}$.

During the interval $I_{1}$, information from the $j$th sub-controller remain the same as the initial information at $t=0$. Consider the interval $I_{3}$, information from the $j$th sub-controller update at $t_{m-v_{j}^{m,i}+1}$ and remain constant. That is,
\begin{align*}
x_{j}(t_{k-v_{j}^{k,i}+1})&\equiv x_{j}(0),~~t_{k-v_{j}^{k,i}+1}\in I_{1},\\
x_{j}(t_{k-v_{j}^{k,i}+1})&\equiv x_{j}(t_{m-v_{j}^{m,i}+1}),~~t_{k-v_{j}^{k,i}+1}\in I_{3}.
\end{align*}
We obtain from the similar guideline in \eqref{14} that
\begin{align}\label{15}
\nonumber&\int_{0}^{t_{m-v_{j}^{m,i}+1-ld_{i}}}\mu_{j}(x_{j}(t),x_{j}(t_{k-v_{j}^{k,i}+1}))dt\\
\leq & \tau_{i}^{2}\int_{0}^{t_{m-v_{j}^{m,i}+1-ld_{i}}}\dot{x}_{j}^{\top}(t)W_{j}\dot{x}_{j}(t)dt,
\end{align}
\begin{align}\label{16}
\nonumber&\int_{t_{m-v_{j}^{m,i}+1}}^{T}\mu_{j}(x_{j}(t),x_{j}(t_{k-v_{j}^{k,i}+1}))dt\\
\leq &\tau_{i}^{2}\int_{t_{m-v_{j}^{m,i}+1}}^{T}\dot{x}_{j}^{\top}(t)W_{j}\dot{x}_{j}(t)dt.
\end{align}
Substituting \eqref{14}-\eqref{16} into \eqref{13} leads to
\begin{align}\label{17}
\int_{0}^{T}\mu_{j}(x_{j}(t),x_{j}(t_{k-v_{j}^{k,i}+1}))dt\leq \tau_{i}^{2}\int_{0}^{T}\dot{x}_{j}^{\top}(t)W_{j}\dot{x}_{j}(t)dt.
\end{align}
It follows from \eqref{12} and \eqref{17} that
\begin{align}\label{18}
&\nonumber\sum_{i=1}^{N}\int_{0}^{T}(\dot{V_{i}}(t)+2\alpha_{i}V_{i}(t)-\sum_{j\in \mathbb{N}_{i}}h_{j}V_{j}(t))dt\\
&+\sum_{i=1}^{N}\int_{0}^{T}(z_{i}^{\top}(t)z_{i}(t)-\gamma^{2}w_{i}^{\top}(t)w_{i}(t))dt
\leq0.
\end{align}

Consider the first three terms in \eqref{18}, we obtain that
\begin{align}\label{19}
&\nonumber\sum_{i=1}^{N}\int_{0}^{T}(\dot{V_{i}}(t)+2\alpha_{i}V_{i}(t)-\sum_{j\in \mathbb{N}_{i}}h_{j}V_{j}(t))dt\\
=&\nonumber\int_{0}^{T}(\dot{V}_{s}(t)+\sum_{i=1}^{N}2\alpha_{i}V_{i}(t)-\sum_{i=1}^{N}\sum_{j\in \mathbb{N}_{i}}h_{i}V_{i}(t))dt\\
=&\nonumber\int_{0}^{T}(\dot{V}_{s}(t)+\sum_{i=1}^{N}(2\alpha_{i}-d_{i}h_{i})V_{i}(t))dt\\
\geq&\int_{0}^{T} (\dot{V}_{\rm s}(t)+\varepsilon V_{\rm s}(t))dt,
\end{align}
where $\varepsilon :=\mathop{{\rm min}}\limits_{i\in \mathbb{N}}\{2\alpha_{i}-d_{i}h_{i}\}>0$
since $0<h_{i}<2\alpha_{i}d_{i}^{-1}$.
 The first ``$=$'' in \eqref{19} holds due to the fact
that
  $\sum_{i=1}^{N}\sum_{j\in \mathbb{N}_{i}}h_{j}V_{j}(t)=\sum_{i=1}^{N}\sum_{j\in \mathbb{N}_{i}}h_{i}V_{i}(t)$.

From \eqref{18} and \eqref{19}, one has
\begin{align*}
&\int_{0}^{T} (\dot{V}_{\rm s}(t)+\varepsilon V_{\rm s}(t))dt\\
+&\sum_{i=1}^{N}\int_{0}^{T}(z_{i}^{\top}(t)z_{i}(t)-\gamma^{2}w_{i}^{\top}(t)w_{i}(t))dt \leq 0,
\end{align*}
which implies that
\begin{align}\label{920}
\nonumber&V_{\rm s}(T)+\sum_{i=1}^{N}\int_{0}^{T}z_{i}^{\top}(t)z_{i}(t)dt\\
\leq&e^{-\varepsilon T}V_{\rm s}(0)+\gamma^{2}\sum_{i=1}^{N}\int_{0}^{T}w_{i}^{\top}(t)w_{i}(t)dt.
\end{align}
When $w_{i}(t)\equiv0$ $(i\in \mathbb{N})$, we have $V_{\rm s}(T)\rightarrow 0$ as $T\rightarrow \infty$, which implies that the closed-loop system \eqref{4} is exponentially stable. On the other hand, when $w_{i}(t)\neq0$ $(i\in \mathbb{N})$, the inequality \eqref{6} is satisfied from \eqref{920} under zero initial conditions $(V_{s}(0)=0)$. Thus, the proof is completed, the closed-loop system \eqref{4} is exponentially stable and has $\mathcal{L}_{2}$-gain less than $\gamma$.
\end{proof}

In what follows, sufficient conditions in the form of LMIs will be derived such that \eqref{8} is satisfied for all $i\in \mathbb{N}$. We begin with the definitions of $Y_{i}$ $(i\in \mathbb{N})$, $\xi_{i}(t)$ and $\delta_{i}(t)$ which is convenient for the subsequent use in this paper.
\setlength{\arraycolsep}{1.5pt}{\begin{align*}
Y_{i}:=&\left\{
          \begin{array}{ll}
            \left[
	\begin{array}{c|cccc|c|cccc}
I & -I  & \mathbf{0} & \cdots & \mathbf{0}  & \mathbf{0}   & \mathbf{0} & \mathbf{0} & \cdots & \mathbf{0}\\
I & I  & \mathbf{0}  &  \cdots & \mathbf{0} & \mathbf{0}  & -I & \mathbf{0} & \cdots &\mathbf{0}\\ \hline
\mathbf{0} &I  & -I  &  \cdots & \mathbf{0} & \mathbf{0}   & \mathbf{0} & \mathbf{0} & \cdots & \mathbf{0}\\
\mathbf{0} &I  & I   &   \cdots & \mathbf{0} & \mathbf{0}   & \mathbf{0} & -I  & \cdots &\mathbf{0}\\ \hline
\vdots & \vdots &  \vdots & \ddots & \vdots  & \vdots &   \vdots &  \vdots & \ddots & \vdots\\ \hline
\mathbf{0} &\mathbf{0} &\mathbf{0} &  \cdots &I & -I   & \mathbf{0} & \mathbf{0} & \cdots & \mathbf{0}\\
\mathbf{0} & \mathbf{0} & \mathbf{0}  &   \cdots & I & I   & \mathbf{0} &\mathbf{0} & \cdots & -I
\end{array}
	\right], & d_{i}\geq2; \\
\begin{bmatrix}
I & -I  & \mathbf{0} & \mathbf{0}  & \mathbf{0}\\
I &  I  & \mathbf{0} & -I  & \mathbf{0}\\
\mathbf{0}  & I & -I  & \mathbf{0} &  \mathbf{0}\\
 \mathbf{0} & I &  I  & \mathbf{0} & -I  \\
\end{bmatrix}, & d_{i}=1,
          \end{array}
        \right.\\
\xi_{i}(t):=&\begin{bmatrix}x_{i}^{\top}(t),&
(\xi_{i}^{a})^{\top}(t),&
x_{i}^{\top}(t-\tau_{i}),&
(\xi_{i}^{b})^{\top}(t)\end{bmatrix}^{\top}\\
=&\left\{
   \begin{array}{ll}
     \left[
	\begin{array}{c}
x_{i}(t)\\ \hline
x_{i}(t_{k})\\
\vdots\\
x_{i}(t_{k-d_{i}+1})\\ \hline
x_{i}(t-\tau_{i})\\ \hline
\frac{2}{t-t_{k}}\int_{t_{k}}^{t}x_{i}(s)ds\\
\frac{2}{t_{k}-t_{k-1}}\int_{t_{k-1}}^{t_{k}}x_{i}(s)ds\\
\vdots\\
\frac{2}{t_{k-d_{i}+1}-t+\tau_{i}}\int_{t-\tau_{i}}^{t_{k-d_{i}+1}}x_{i}(s)ds
\end{array}
	\right], & d_{i}\geq 2; \\
     \left[
	\begin{array}{c}
x_{i}(t)\\
x_{i}(t_{k})\\
x_{i}(t-\tau_{i})\\
\frac{2}{t-t_{k}}\int_{t_{k}}^{t}x_{i}(s)ds\\
\frac{2}{t_{k}-t+\tau_{i}}\int_{t-\tau_{i}}^{t_{k}}x_{i}(s)ds
\end{array}
	\right], & d_{i}= 1,
   \end{array}
 \right.\\
\delta_{i}(t):=&\begin{bmatrix}(\delta_{i}^{0})^{\top}(t),&
(\delta_{i}^{1})^{\top}(t),&
\cdots,&
(\delta_{i}^{d_{i}})^{\top}(t)\end{bmatrix}^{\top}\\
=&\left\{
    \begin{array}{ll}
      \left[
	\begin{array}{c}
f_{i}(t,t_{k})\\
f_{i}(t_{k},t_{k-1})\\
\vdots \\
f_{i}(t_{k-d_{i}+1},t-\tau_{i})
\end{array}
	\right], & d_{i}\geq2; \\
 \left[
	\begin{array}{c}
f_{i}(t,t_{k})\\
f_{i}(t_{k},t-\tau_{i})
\end{array}
	\right], & d_{i}=1,
    \end{array}
  \right.
\end{align*}}
where $f_{i}(p,q)$ $(i\in \mathbb{N})$ is a vector function defined as
\begin{align*}
f_{i}(p,q)&=\begin{bmatrix}
x_{i}(p)-x_{i}(q)\\
x_{i}(p)+x_{i}(q)-\frac{2}{p-q}\int_{q}^{p}x_{i}(s)ds
\end{bmatrix}.
\end{align*}
It follows from the above definitions that the following relation holds:
\begin{align}\label{27}
\delta_{i}(t)=Y_{i}\xi_{i}(t),~~
d_{i}\in \mathbb{N}^{+}.
\end{align}
Define $\bar{\Psi}_{i}:=e^{-2\alpha_{i}\tau_{i}}Y_{i}^{\top}\Psi_{i}Y_{i}$ $(i\in \mathbb{N})$ with $\Psi_{i}$ in \eqref{eq1}, and partition $\bar{\Psi}_{i}$ in accordance with the partition of $\xi_{i}(t)=[x_{i}^{\top}(t),(\xi_{i}^{a})^{\top}(t),x_{i}^{\top}(t-\tau_{i}),(\xi_{i}^{b})^{\top}(t)]^{\top}$, we get
\begin{align}\label{930}
\bar{\Psi}_{i}=e^{-2\alpha_{i}\tau_{i}}Y_{i}^{\top}\Psi_{i}Y_{i}=\begin{bmatrix}
\bar{\Psi}_{i}^{11} & \bar{\Psi}_{i}^{12} & \bar{\Psi}_{i}^{13} & \bar{\Psi}_{i}^{14}\\
* & \bar{\Psi}_{i}^{22} & \bar{\Psi}_{i}^{23}& \bar{\Psi}_{i}^{24}\\
* & * & \bar{\Psi}_{i}^{33}& \bar{\Psi}_{i}^{34}\\
* & * & * & \bar{\Psi}_{i}^{44}
\end{bmatrix},~i\in \mathbb{N}.
\end{align}

Now, we are in a position to state the following proposition.
\begin{proposition}
If there exists matrices $G_{i}$ $(i\in \mathbb{N})$ and $\hat{R}_{i}=\begin{bmatrix}R_{i}&\mathbf{0}\\ \mathbf{0}&3R_{i}\end{bmatrix}$ satisfying \eqref{5}, then
\begin{align}\label{927}
\nonumber&-\tau_{i}\int_{t-\tau_{i}}^{t}e^{2\alpha_{i}(s-t)}\dot{x}_{i}^{\top}(s)R_{i}\dot{x}_{i}(s)ds\\
\leq &-\xi_{i}^{\top}(t)\bar{\Psi}_{i}\xi_{i}(t),~~t\in[t_{k},t_{k+1}),
\end{align}
hold for any $d_{i}\in \mathbb{N}^{+}$, where $\bar{\Psi}_{i}$ $(i\in \mathbb{N})$ is given by \eqref{930}.
\end{proposition}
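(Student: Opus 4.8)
The plan is to establish \eqref{927} in three moves: first peel off the exponential factor using $\hat R_i\geq0$; then cut the integral $\int_{t-\tau_i}^{t}$ at the sampling instants that fall inside $(t-\tau_i,t)$ and bound each piece by the Bessel--Legendre inequality of Lemma~\ref{Lemma2}; and finally feed the resulting sum into Lemma~\ref{Lemma3} and rewrite $\delta_i(t)$ as $Y_i\xi_i(t)$ via \eqref{27}. Throughout we use that \eqref{5} forces $\hat R_i\geq0$, hence $R_i\geq0$, which is exactly what Lemma~\ref{Lemma2} and the next step require.

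\emph{Step 1 (remove the exponential).} For $s\in[t-\tau_i,t]$ one has $s-t\in[-\tau_i,0]$, so $e^{2\alpha_i(s-t)}\geq e^{-2\alpha_i\tau_i}$; since $\dot x_i^\top(s)R_i\dot x_i(s)\geq0$, multiplying by $-\tau_i<0$ gives
\[
-\tau_i\!\int_{t-\tau_i}^{t}\!e^{2\alpha_i(s-t)}\dot x_i^\top(s)R_i\dot x_i(s)\,ds\leq -e^{-2\alpha_i\tau_i}\tau_i\!\int_{t-\tau_i}^{t}\!\dot x_i^\top(s)R_i\dot x_i(s)\,ds.
\]
\emph{Step 2 (split along the sampling grid).} Fix $t\in[t_k,t_{k+1})$; because $\tau_i=d_i\Delta$, the grid points in $(t-\tau_i,t)$ are exactly $t_{k-d_i+1},\dots,t_k$, so $\int_{t-\tau_i}^{t}$ decomposes into the $d_i+1$ pieces $[t_k,t]$, $[t_{k-v},t_{k-v+1}]$ $(v=1,\dots,d_i-1)$ and $[t-\tau_i,t_{k-d_i+1}]$. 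Applying Lemma~\ref{Lemma2} on each piece, and noting that the vectors $[\Upsilon_0^\top,\Upsilon_1^\top]^\top$ it produces are precisely $\delta_i^0(t)=f_i(t,t_k)$, $\delta_i^v(t)=f_i(t_{k-v+1},t_{k-v})$ and $\delta_i^{d_i}(t)=f_i(t_{k-d_i+1},t-\tau_i)$, with $\begin{bmatrix}R_i&\mathbf{0}\\\mathbf{0}&3R_i\end{bmatrix}=\hat R_i$, yields
\[
\int_{t-\tau_i}^{t}\dot x_i^\top R_i\dot x_i\,ds\geq\frac{(\delta_i^0)^\top\hat R_i\delta_i^0}{t-t_k}+\sum_{v=1}^{d_i-1}\frac{(\delta_i^v)^\top\hat R_i\delta_i^v}{t_{k-v+1}-t_{k-v}}+\frac{(\delta_i^{d_i})^\top\hat R_i\delta_i^{d_i}}{t_{k-d_i+1}-t+\tau_i}.
\]

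\emph{Step 3 (assemble).} Multiplying the last display by $-e^{-2\alpha_i\tau_i}\tau_i<0$ reverses it, and the right-hand side is then exactly the quantity bounded below in Lemma~\ref{Lemma3}(i), so it is $\leq -e^{-2\alpha_i\tau_i}\delta_i^\top(t)\Psi_i\delta_i(t)$. Substituting $\delta_i(t)=Y_i\xi_i(t)$ from \eqref{27} gives $-e^{-2\alpha_i\tau_i}\xi_i^\top(t)Y_i^\top\Psi_iY_i\xi_i(t)=-\xi_i^\top(t)\bar\Psi_i\xi_i(t)$ by the definition of $\bar\Psi_i$; chaining the three inequalities of Steps~1--3 proves \eqref{927} for $d_i\geq2$. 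The case $d_i=1$ is handled identically, with the grid decomposition reduced to the two pieces $[t_k,t]$ and $[t-\tau_i,t_k]$ and Lemma~\ref{Lemma3}(ii) used in place of part~(i).

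The only real work is bookkeeping: one must check that the reciprocal subinterval lengths emerging in Step~2 match the coefficients appearing in Lemma~\ref{Lemma3}, and that the $\Upsilon$-vectors of Lemma~\ref{Lemma2} reproduce the blocks of $\delta_i(t)$ with the indexing used in the definition of $\delta_i(t)$ and $Y_i$. A minor point is the degenerate instant $t=t_k$, where $[t_k,t]$ collapses; this is harmless since the associated term carries $\delta_i^0(t)=f_i(t_k,t_k)=0$, and the same convention is already implicit in the statement of Lemma~\ref{Lemma3}.
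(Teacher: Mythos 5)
Your proof is correct and follows essentially the same route as the paper: bound the exponential below by $e^{-2\alpha_i\tau_i}$, partition $[t-\tau_i,t]$ at the sampling instants, apply Lemma~\ref{Lemma2} on each subinterval, combine via Lemma~\ref{Lemma3}, and substitute $\delta_i(t)=Y_i\xi_i(t)$ to reach $\bar\Psi_i$. Your remark on the degenerate instant $t=t_k$ is a small point the paper leaves implicit, but otherwise the two arguments coincide.
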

\begin{proof}
The proof is divided into two cases. That is,

\textbf{Case 1:} $d_{i}\geq 2$, one has
 \begin{align}\label{eq26}
\nonumber&-\tau_{i}\int_{t-\tau_{i}}^{t}e^{2\alpha_{i}(s-t)}\dot{x}_{i}^{\top}(s)R_{i}\dot{x}_{i}(s)ds\\
\leq \nonumber& -\tau_{i}e^{-2\alpha_{i}\tau_{i}}\int_{t-\tau_{i}}^{t}\dot{x}_{i}^{\top}(s)R_{i}\dot{x}_{i}(s)ds\\
=\nonumber&-\tau_{i}e^{-2\alpha_{i}\tau_{i}}\Big[\int_{t_{k}}^{t}\dot{x}_{i}^{\top}(s)R_{i}\dot{x}_{i}(s)ds\\
&\nonumber+\sum_{v=1}^{d_{i}-1}\int_{t_{k-v}}^{t_{k-v+1}}\dot{x}_{i}^{\top}(s)R_{i}\dot{x}_{i}(s)ds\\
&\nonumber+\int_{t-\tau_{i}}^{t_{k-d_{i}+1}}\dot{x}_{i}^{\top}(s)R_{i}\dot{x}_{i}(s)ds \Big]\\
\nonumber\leq&-\tau_{i}e^{-2\alpha_{i}\tau_{i}}\Big[\frac{1}{t-t_{k}}(\delta_{i}^{0})^{\top}(t)\hat{R}_{i}\delta_{i}^{0}(t)\\ \nonumber&+\sum_{v=1}^{d_{i}-1}\frac{1}{t_{k-v+1}-t_{k-v}}(\delta_{i}^{v})^{\top}(t)\hat{R}_{i}\delta_{i}^{v}(t) \\
\nonumber&+\frac{1}{t_{k-d_{i}+1}-t+\tau_{i}}(\delta_{i}^{d_{i}})^{\top}(t)\hat{R}_{i}\delta_{i}^{d_{i}}(t)\Big]\\
\nonumber\leq&-e^{-2\alpha_{i}\tau_{i}}\delta_{i}^{\top}(t)\Psi_{i}\delta_{i}(t)\\
\nonumber=&-e^{-2\alpha_{i}\tau_{i}} \xi_{i}^{\top}(t)Y_{i}^{\top}\Psi_{i}Y_{i}\xi_{i}(t)\\
=&-\xi_{i}^{\top}(t)\bar{\Psi}_{i}\xi_{i}(t),~~t\in[t_{k},t_{k+1}).
\end{align}
Here, since $e^{2\alpha_{i}(s-t)}\geq e^{-2\alpha_{i}\tau_{i}}$  for $s\in[t-\tau_{i},t]$, then the first ``$\leq$'' in \eqref{eq26} holds. The first ``$=$'' holds because of the partition of the integral interval $[t-\tau_{i},t]=[t-\tau_{i},t_{k-d_{i}+1})\cup(\cup_{v=1}^{d_{i}-1}[t_{k-v},t_{k-v+1}))\cup[t_{k},t]$. The second $``\leq" $ holds
due to Lemma \ref{Lemma2}, the last $``\leq" $ holds because of Lemma \ref{Lemma3}. The second ``$=$'' holds due to  \eqref{27}, and the last ``$=$'' holds because of \eqref{930}.

\textbf{Case 2:} $d_{i}=1$, it follows from the similar guideline in \eqref{eq26} that
 \begin{align}\label{eq27}
\nonumber&-\tau_{i}\int_{t-\tau_{i}}^{t}e^{2\alpha_{i}(s-t)}\dot{x}_{i}^{\top}(s)R_{i}\dot{x}_{i}(s)ds\\
\leq \nonumber& -\tau_{i}e^{-2\alpha_{i}\tau_{i}}\int_{t-\tau_{i}}^{t}\dot{x}_{i}^{\top}(s)R_{i}\dot{x}_{i}(s)ds\\
=\nonumber&-\tau_{i}e^{-2\alpha_{i}\tau_{i}}\Big[\int_{t_{k}}^{t}\dot{x}_{i}^{\top}(s)R_{i}\dot{x}_{i}(s)ds+\int_{t-\tau_{i}}^{t_{k}}\dot{x}_{i}^{\top}(s)R_{i}\dot{x}_{i}(s)ds \Big]\\
\nonumber\leq&-\tau_{i}e^{-2\alpha_{i}\tau_{i}}\Big[\frac{1}{t-t_{k}}(\delta_{i}^{0})^{\top}(t)\hat{R}_{i}\delta_{i}^{0}(t)\\
\nonumber &+\frac{1}{t_{k}-t+\tau_{i}}(\delta_{i}^{1})^{\top}(t)\hat{R}_{i}\delta_{i}^{1}(t)\Big]\\
\nonumber\leq&-e^{-2\alpha_{i}\tau_{i}}\delta_{i}^{\top}(t)\Psi_{i}\delta_{i}(t)\\
=&-\xi_{i}^{\top}(t)\bar{\Psi}_{i}\xi_{i}(t),~~t\in[t_{k},t_{k+1}).
\end{align}
Therefore, the inequality \eqref{927} holds for any $d_{i}\in \mathbb{N}^{+}$, the proof is completed.
\end{proof}

\begin{remark}
  The term $\int_{t-\tau_{i}}^{t}e^{2\alpha_{i}(s-t)}\dot{x}_{i}^{\top}(s)R_{i}\dot{x}_{i}(s)ds$
also appears
   in \cite{liu2012stability,ugrinovskii2014Round}, where Jensen's inequality is used. However, this paper utilize Lemma \ref{Lemma2}, which is an extension of Jensen's inequality \cite{liu2012wirtinger}, to deal with the integral term to reduce the conservatism of the result.
\end{remark}

\begin{theorem}\label{Theorem2}
Given positive constants $\Delta,\gamma,\alpha_{i}$ $(i\in \mathbb{N})$, $h_{i}$ satisfying $0<h_{i}<2\alpha_{i}d_{i}^{-1}$, the matrix $T_{i}$ satisfying \eqref{20}. If there exist matrices $P_{i}>0$ $(i\in \mathbb{N})$, $Q_{i}\geq0,R_{i}\geq0,W_{i}>0$,\\$G_{i},U_{i},L_{i}^{21},L_{i}^{22},H_{i}^{21},H_{i}^{22},M_{i}^{21},M_{i}^{22},N_{i}^{21},N_{i}^{22},X_{i}$ and $Z_{i}$ such that \eqref{5} and
\setlength{\arraycolsep}{2pt}\begin{align}\label{23}
\Theta_{i}=\begin{bmatrix}
  \Theta_{i}^{11} & \Theta_{i}^{12} & \mathbf{0}  & \mathbf{0}  & \mathbf{0} & \Theta_{i}^{16} &  \Theta_{i}^{17}& \Theta_{i}^{18}\\
              *         & \Theta_{i}^{22} & -\bar{\Psi}_{i}^{12} & -\bar{\Psi}_{i}^{13}  & -\bar{\Psi}_{i}^{14} & \Theta_{i}^{26} & \Theta_{i}^{27}&  \Theta_{i}^{28}\\
 *       & *       & -\bar{\Psi}_{i}^{22} & -\bar{\Psi}_{i}^{23}  & -\bar{\Psi}_{i}^{24} & \mathbf{0} & \mathbf{0}& \mathbf{0}\\
 *         & *      & * &   \Theta_{i}^{44}  & -\bar{\Psi}_{i}^{34} & \mathbf{0} & \mathbf{0}& \mathbf{0}\\
  *                      & *                    & * &    *  & -\bar{\Psi}_{i}^{44} & \mathbf{0} & \mathbf{0}& \mathbf{0}\\
  *                     & *                     & * &  * &  * & \Theta_{i}^{66} & \Theta_{i}^{67}& \Theta_{i}^{68}\\
*                      & *                       & * &  * &  * &  * & \Theta_{i}^{77}& \Theta_{i}^{78}\\
*                      & *                       & * &  * &  * &  * &  *& \Theta_{i}^{88}
\end{bmatrix}<0,
\end{align}
hold for all $i\in \mathbb{N}$,
where
\begin{align*}
\Theta_{i}^{11}&=\tau_{i}^{2}R_{i}-{\rm He}\{L_{i}^{\top}T_{i}\}+\sum_{j\in \mathbb{N}_{i}}\tau_{j}^{2}W_{i},\\
\Theta_{i}^{12}&=P_{i} +L_{i}^{\top}T_{i}A_{ii}+\bar{X}_{i}-T_{i}^{\top}H_{i},\\
\Theta_{i}^{16}&=L_{i}^{\top}T_{i}[A_{ii_{1}}\cdots A_{ii_{d_{i}}}]-T_{i}^{\top}M_{i}, \\
\Theta_{i}^{17}&=\bar{Z}_{i} -T_{i}^{\top}N_{i},~~
\Theta_{i}^{18}=L_{i}^{\top}T_{i}E_{i},\\
\Theta_{i}^{22}&={\rm He}\{\bar{X}_{i}+H_{i}^{\top}T_{i}A_{ii}\} -\bar{\Psi}_{i}^{11}+2\alpha_{i}P_{i}+Q_{i}+C_{i}^{\top}C_{i},\\
\Theta_{i}^{26}&= H_{i}^{\top}T_{i}[A_{ii_{1}} \cdots A_{ii_{d_{i}}}]+A_{ii}^{\top}T_{i}^{\top}M_{i}+\bar{X}^{\top}_{i}, \\
\Theta_{i}^{27}&=\bar{Z}_{i} +A_{ii}^{\top}T_{i}^{\top}N_{i}+ \bar{X}^{\top}_{i},\\
\Theta_{i}^{28}&=H_{i}^{\top}T_{i}E_{i}+C_{i}^{\top}F_{i},~~
\Theta_{i}^{44}=-\bar{\Psi}_{i}^{33}-e^{-2\alpha_{i}\tau_{i}}Q_{i},\\
\Theta_{i}^{66}&=-\Omega_{i}^{1}-\Omega_{i}^{2}+{\rm He}\{M_{i}^{\top}T_{i}[A_{ii_{1}} \cdots A_{ii_{d_{i}}}]\},\\
\Theta_{i}^{67}&=\Omega_{i}^{2}+\bar{Z}_{i} +[A_{ii_{1}} \cdots A_{ii_{d_{i}}}]^{\top}T_{i}^{\top}N_{i},\\
\Theta_{i}^{68}&=M_{i}^{\top}T_{i}E_{i},~~
\Theta_{i}^{77}=-\Omega_{i}^{2}+ {\rm He}\{\bar{Z}_{i}\},\\
\Theta_{i}^{78}&=N_{i}^{\top}T_{i}E_{i},~~
\Theta_{i}^{88}=F_{i}^{\top}F_{i}-\gamma^{2}I,\\
L_{i}&=\begin{bmatrix}
U_{i} & \mathbf{0}\\
L_{i}^{21} &  L_{i}^{22}
\end{bmatrix},~~
H_{i}=\begin{bmatrix}
U_{i} & \mathbf{0}\\
H_{i}^{21} &  H_{i}^{22}
\end{bmatrix},~~\bar{X}_{i}=\begin{bmatrix}
X_{i}\\
\mathbf{0}
\end{bmatrix},\\
M_{i}&=\begin{bmatrix}
U_{i} & \mathbf{0}\\
M_{i}^{21} &  M_{i}^{22}
\end{bmatrix},~~
N_{i}=\begin{bmatrix}
U_{i} & \mathbf{0}\\
N_{i}^{21} &  N_{i}^{22}
\end{bmatrix},~~\bar{Z}_{i}=\begin{bmatrix}
Z_{i}\\
\mathbf{0}
\end{bmatrix},\\
\Omega_{i}^{1}&={\rm diag}(h_{i_{1}}P_{i_{1}},...,h_{i_{d_{i}}}P_{i_{d_{i}}}),\\
\Omega_{i}^{2}&={\rm diag}(
 \frac{\pi^{2}}{4}W_{i_{1}},...,\frac{\pi^{2}}{4}W_{i_{d_{i}}}),\\
\hat{R}_{i}&=\begin{bmatrix}R_{i} & \mathbf{0} \\ \mathbf{0} & 3R_{i}\end{bmatrix},
\end{align*}
the symbol $\bar{\Psi}_{i}^{uv}$ is defined in \eqref{930}, in which $u,v\in\{1,2,3,4\}.$
 Then, the state feedback sub-controllers \eqref{2} with gains
\begin{align}\label{9925}
K_{ii}=(U_{i}^{\top})^{-1}X_{i},\quad [K_{ii_{1}}\cdots K_{ii_{d_{i}}}]=(U_{i}^{\top})^{-1}Z_{i}, \quad i\in \mathbb{N},
\end{align}
ensure that the closed-loop system \eqref{4} is exponentially stable and has $\mathcal{L}_{2}$-gain less than $\gamma$.
\end{theorem}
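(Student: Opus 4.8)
The plan is to verify that condition \eqref{8} in Theorem~\ref{Theorem1} holds under the LMI \eqref{23}, so that exponential stability and the $\mathcal{L}_{2}$-gain bound follow directly. First I would choose the Lyapunov function for each subsystem in the standard time-delay form
\[
V_{i}(t)=x_{i}^{\top}(t)P_{i}x_{i}(t)+\int_{t-\tau_{i}}^{t}e^{2\alpha_{i}(s-t)}x_{i}^{\top}(s)Q_{i}x_{i}(s)ds+\tau_{i}\int_{t-\tau_{i}}^{t}\int_{\theta}^{t}e^{2\alpha_{i}(s-t)}\dot{x}_{i}^{\top}(s)R_{i}\dot{x}_{i}(s)ds\,d\theta,
\]
and compute $\dot V_{i}(t)+2\alpha_{i}V_{i}(t)$. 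The derivative of the double-integral term produces $\tau_{i}^{2}\dot{x}_{i}^{\top}(t)R_{i}\dot{x}_{i}(t)-\tau_{i}\int_{t-\tau_{i}}^{t}e^{2\alpha_{i}(s-t)}\dot{x}_{i}^{\top}(s)R_{i}\dot{x}_{i}(s)ds$; the first piece gives the $\tau_{i}^{2}R_{i}$ block in $\Theta_{i}^{11}$, and the second, negative, piece is bounded above by $-\xi_{i}^{\top}(t)\bar{\Psi}_{i}\xi_{i}(t)$ via the Proposition, which is exactly where the $-\bar{\Psi}_{i}^{uv}$ entries in \eqref{23} enter. The single-integral term contributes $Q_{i}$ and $-e^{-2\alpha_{i}\tau_{i}}Q_{i}$ (the latter in $\Theta_{i}^{44}$), and the $P_{i}$ term gives $2\alpha_{i}P_{i}$ together with the cross terms $2x_{i}^{\top}(t)P_{i}\dot{x}_{i}(t)$.

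Next I would handle the coupling terms $-\sum_{j\in\mathbb{N}_{i}}h_{j}V_{j}(t)$ and $-\sum_{j\in\mathbb{N}_{i}}\mu_{j}(\cdot)$. Changing the summation order as in \eqref{9}, the aggregate $-\sum_{i}\sum_{j\in\mathbb{N}_{i}}h_{j}V_{j}$ reorganizes into per-subsystem blocks $-h_{i_{1}}P_{i_{1}},\dots,-h_{i_{d_{i}}}P_{i_{d_{i}}}$ acting on $x_{i}^{c}(t)$ — this is the matrix $\Omega_{i}^{1}$; similarly the $\mu_{j}$ terms, using definition \eqref{111}, give the quadratic form with $\Omega_{i}^{2}=\mathrm{diag}(\tfrac{\pi^{2}}{4}W_{i_{1}},\dots)$ coupling $x_{i}^{c}(t)$ and $x_{i}^{d}(t)$, appearing in $\Theta_{i}^{66},\Theta_{i}^{67},\Theta_{i}^{77}$. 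The extra $\sum_{j\in\mathbb{N}_{i}}\tau_{j}^{2}\dot{x}_{i}^{\top}(t)W_{i}\dot{x}_{i}(t)$ term supplies the $\sum_{j\in\mathbb{N}_{i}}\tau_{j}^{2}W_{i}$ summand in $\Theta_{i}^{11}$. The output and disturbance terms $z_{i}^{\top}z_{i}-\gamma^{2}w_{i}^{\top}w_{i}$ expand using $z_{i}=C_{i}x_{i}+F_{i}w_{i}$ into $C_{i}^{\top}C_{i}$ in $\Theta_{i}^{22}$, $C_{i}^{\top}F_{i}$ in $\Theta_{i}^{28}$, and $F_{i}^{\top}F_{i}-\gamma^{2}I$ in $\Theta_{i}^{88}$.

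The crucial linearization step is to eliminate the products of controller gains with $P_{i}$. Here I would use the descriptor/Finsler approach: since $\dot{x}_{i}(t)=\bar{A}_{ii}x_{i}(t)+\bar{A}_{ij}x_{i}^{c}(t)+\bar{K}_{ij}x_{i}^{d}(t)+E_{i}w_{i}(t)$, I add to the quadratic form the null term $2[\dot{x}_{i}^{\top}(t)\Phi_{i1}+x_{i}^{\top}(t)\Phi_{i2}+\,\cdots\,](-\dot{x}_{i}(t)+\bar{A}_{ii}x_{i}(t)+\cdots)=0$ with slack matrices structured through $T_{i}$. Writing $L_{i}=T_{i}^{\top}\Phi_{i1}^{\top}$ etc. with the block-lower-triangular structure $L_{i}=\bigl[\begin{smallmatrix}U_{i}&0\\ L_{i}^{21}&L_{i}^{22}\end{smallmatrix}\bigr]$ forces $\Phi_{i}^{\top}B_{i}=\bigl[\begin{smallmatrix}U_{i}^{\top}\\ 0\end{smallmatrix}\bigr]$ by \eqref{20}, so that $\Phi_{i}^{\top}B_{i}K_{ii}=\bigl[\begin{smallmatrix}U_{i}^{\top}K_{ii}\\ 0\end{smallmatrix}\bigr]=\bar{X}_{i}$ and $\Phi_{i}^{\top}\bar{K}_{ij}=\bar{Z}_{i}$ with $X_{i}=U_{i}^{\top}K_{ii}$, $Z_{i}=U_{i}^{\top}[K_{ii_{1}}\cdots K_{ii_{d_{i}}}]$; these are exactly the change of variables \eqref{9925}. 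Substituting and collecting all terms in the vector $[\dot{x}_{i}^{\top}(t),x_{i}^{\top}(t),(\xi_{i}^{a})^{\top}(t),x_{i}^{\top}(t-\tau_{i}),(\xi_{i}^{b})^{\top}(t),(x_{i}^{c})^{\top}(t),(x_{i}^{d})^{\top}(t),w_{i}^{\top}(t)]^{\top}$ yields precisely the matrix $\Theta_{i}$; hence $\Theta_{i}<0$ implies \eqref{8}, and Theorem~\ref{Theorem1} concludes the proof.

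The main obstacle I anticipate is bookkeeping the slack-variable substitution so that every entry of $\Theta_{i}$ matches, in particular getting the four distinct slack matrices $L_{i},H_{i},M_{i},N_{i}$ (paired with the components $\dot{x}_{i}(t),x_{i}(t),x_{i}^{c}(t),x_{i}^{d}(t)$ respectively) to produce the off-diagonal blocks $\Theta_{i}^{12},\Theta_{i}^{16},\Theta_{i}^{17},\Theta_{i}^{18},\Theta_{i}^{26},\Theta_{i}^{27},\Theta_{i}^{28},\Theta_{i}^{68},\Theta_{i}^{78}$ correctly, and verifying that the $\bar{\Psi}_{i}$ partition aligns with the ordering of $\xi_{i}(t)$ so the $-\bar{\Psi}_{i}^{uv}$ blocks land in the right positions. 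The rest is routine expansion of the Lyapunov derivative and regrouping of the double sums over neighbor sets.
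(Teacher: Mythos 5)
Your proposal follows essentially the same route as the paper's proof: the same Lyapunov--Krasovskii functional, the bound on the integral term via the Proposition, the $\Omega_{i}^{1},\Omega_{i}^{2}$ quadratic forms for the coupling and $\mu_{j}$ terms, the Finsler-type null term \eqref{22} with the four structured slack matrices $L_{i},H_{i},M_{i},N_{i}$, and the linearizing change of variables $X_{i}=U_{i}^{\top}K_{ii}$, $Z_{i}=U_{i}^{\top}[K_{ii_{1}}\cdots K_{ii_{d_{i}}}]$, concluding via Theorem \ref{Theorem1}. The only detail you leave implicit is that $\Theta_{i}^{11}<0$ forces ${\rm He}\{L_{i}^{\top}T_{i}\}>0$ and hence the invertibility of $L_{i}$ and of $U_{i}$, which the paper states explicitly and which is needed for the gain formulas \eqref{9925} to be well defined.
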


\begin{proof}
Consider the Lyapunov function $V_{s}(t)=\mathop{\sum}\limits_{i=1}^{N}V_{i}(t)$, where
\begin{align}\label{7}
\nonumber V_{i}(t)=&x_{i}^{\top}(t)P_{i} x_{i}(t) +\int_{t-\tau_{i}}^{t}e^{2\alpha_{i}(s-t)}x_{i}^{\top}(s)Q_{i}x_{i}(s)ds\\ &+\tau_{i}\int_{-\tau_{i}}^{0}\int_{t+\theta}^{t}e^{2\alpha_{i}(s-t)}\dot{x}_{i}^{\top}(s)R_{i}\dot{x}_{i}(s)dsd\theta,~ i\in \mathbb{N}.
\end{align}

Differentiating $V_{i}(t)$ $(i\in \mathbb{N})$ with respect to $t$, we get
\begin{align}\label{9930}
\nonumber\dot{V}_{i}(t)=&2x^{\top}_{i}(t)P_{i} \dot{x}_{i}(t)+x_{i}^{\top}(t)Q_{i}x_{i}(t)+\tau_{i}^{2}\dot{x}_{i}^{\top}(t)R_{i}\dot{x}_{i}(t)\\
\nonumber&-e^{-2\alpha_{i}\tau_{i}}x_{i}^{\top}(t-\tau_{i})Q_{i}x_{i}(t-\tau_{i})\\
\nonumber&-\tau_{i}\int_{t-\tau_{i}}^{t}e^{2\alpha_{i}(s-t)}\dot{x}_{i}^{\top}(s)R_{i}\dot{x}_{i}(s)ds\\
\nonumber&-2\alpha_{i}\int_{t-\tau_{i}}^{t}e^{2\alpha_{i}(s-t)}x_{i}^{\top}(s)Q_{i}x_{i}(s)ds\\
&-2\alpha_{i}\tau_{i}\int_{-\tau_{i}}^{0}\int_{t+\theta}^{t}e^{2\alpha_{i}(s-t)}\dot{x}_{i}^{\top}(s)R_{i}\dot{x}_{i}(s)dsd\theta,~i\in \mathbb{N}.
\end{align}
It follows from \eqref{7} and \eqref{9930} that
\begin{align}\label{28}
\nonumber&\dot{V}_{i}(t)+2\alpha_{i}V_{i}(t)\\
 \nonumber=&2x^{\top}_{i}(t)P_{i}\dot{x}_{i}(t)+\tau_{i}^{2}\dot{x}_{i}^{\top}(t)R_{i}\dot{x}_{i}(t)+x_{i}^{\top}(t)(2\alpha_{i}P_{i} +Q_{i})x_{i}(t) \\
 \nonumber &-e^{-2\alpha_{i}\tau_{i}}x_{i}^{\top}(t-\tau_{i})Q_{i}x_{i}(t-\tau_{i})\\
\nonumber&-\tau_{i}\int_{t-\tau_{i}}^{t}e^{2\alpha_{i}(s-t)}\dot{x}_{i}^{\top}(s)R_{i}\dot{x}_{i}(s)ds\\
\nonumber\leq&2x^{\top}_{i}(t)P_{i}\dot{x}_{i}(t)+\tau_{i}^{2}\dot{x}_{i}^{\top}(t)R_{i}\dot{x}_{i}(t)+x_{i}^{\top}(t)(2\alpha_{i}P_{i} +Q_{i})x_{i}(t) \\
 &-e^{-2\alpha_{i}\tau_{i}}x_{i}^{\top}(t-\tau_{i})Q_{i}x_{i}(t-\tau_{i})- \xi_{i}^{\top}(t)\bar{\Psi}_{i}\xi_{i}(t).
\end{align}
Here, the last ``$\leq$'' holds because of Proposition 1.

On the other hand, from the definition of $V_{i}(t)$ in \eqref{7}, one has
\begin{align} \label{937}
-\sum_{j\in \mathbb{N}_{i}}h_{j}V_{j}(t)\leq -(x_{i}^{c})^{\top}(t)\Omega_{i}^{1}x_{i}^{c}(t),
\end{align}
where $x_{i}^{c}(t)$ is defined after \eqref{4}.

From the definition of  $\mu_{j}(x_{j}(t),x_{j}(t_{k-v_{j}^{k,i}+1}))$ in \eqref{111}, one has
\begin{align}\label{30}
\sum_{j\in \mathbb{N}_{i}}\mu_{j}(x_{j}(t),x_{j}(t_{k-v_{j}^{k,i}+1}))
=\begin{bmatrix}
 x_{i}^{c}(t)\\
 x_{i}^{d}(t)
 \end{bmatrix}^{\top}
 \begin{bmatrix}
\Omega_{i}^{2}  & -\Omega_{i}^{2}\\
 * & \Omega_{i}^{2}
 \end{bmatrix}
 \begin{bmatrix}
 x_{i}^{c}(t)\\
 x_{i}^{d}(t)
 \end{bmatrix},
\end{align}
where $x_{i}^{d}(t)$ is defined after \eqref{4}.

We obtain from \eqref{4} that
\begin{align}\label{22}
\nonumber&(T_{i}^{\top}L_{i}\dot{x}_{i}(t)+T_{i}^{\top}H_{i}x_{i}(t)+T_{i}^{\top}M_{i}x_{i}^{c}(t)+T_{i}^{\top}N_{i}x_{i}^{d}(t))^{\top}\\
\nonumber\times&(-\dot{x}_{i}(t)+\bar{A}_{ii}x_{i}(t)+\bar{A}_{ij} x_{i}^{c}(t)+\bar{K}_{ij}x_{i}^{d}(t)+E_{i}w_{i}(t))\\
\nonumber+&(-\dot{x}_{i}(t)+\bar{A}_{ii}x_{i}(t)+\bar{A}_{ij} x_{i}^{c}(t)+\bar{K}_{ij}x_{i}^{d}(t)+E_{i}w_{i}(t))^{\top}\\
\times& (T_{i}^{\top}L_{i}\dot{x}_{i}(t)+T_{i}^{\top}H_{i}x_{i}(t)+T_{i}^{\top}M_{i}x_{i}^{c}(t)+T_{i}^{\top}N_{i}x_{i}^{d}(t))\equiv0,
\end{align}
where $T_{i}$ $(i\in \mathbb{N})$ is the matrix satisfying \eqref{20}.

Combining \eqref{28}-\eqref{22} with \eqref{8} leads to
\begin{align}\label{936}
\nonumber&\dot{V}_{i}(t)+2\alpha_{i}V_{i}(t)
+\sum_{j\in \mathbb{N}_{i}}\tau_{j}^{2}\dot{x}_{i}^{\top}(t)W_{i}\dot{x}_{i}(t)\\
\nonumber&-\sum_{j\in \mathbb{N}_{i}}h_{j}V_{j}(t)-\sum_{j\in \mathbb{N}_{i}}\mu_{j}(x_{j}(t),x_{j}(t_{k-v_{j}^{k,i}+1}))\\
\nonumber&+z_{i}^{\top}(t)z_{i}(t)-\gamma^{2}w_{i}^{\top}(t)w_{i}(t)\\
\nonumber\leq&2x^{\top}_{i}(t)P_{i}\dot{x}_{i}(t)+\tau_{i}^{2}\dot{x}_{i}^{\top}(t)R_{i}\dot{x}_{i}(t)+x_{i}^{\top}(t)(2\alpha_{i}P_{i} +Q_{i})x_{i}(t) \\
\nonumber&-e^{-2\alpha_{i}\tau_{i}}x_{i}^{\top}(t-\tau_{i})Q_{i}x_{i}(t-\tau_{i})- \xi_{i}^{\top}(t)\bar{\Psi}_{i}\xi_{i}(t)\\
\nonumber&+\sum_{j\in \mathbb{N}_{i}}\tau_{j}^{2}\dot{x}_{i}^{\top}(t)W_{i}\dot{x}_{i}(t)+x_{i}^{\top}(t)C_{i}^{\top}C_{i}x_{i}(t)\\
\nonumber&+2w_{i}^{\top}(t)F_{i}^{\top}C_{i}x_{i}(t)+w_{i}^{\top}(t)(F_{i}^{\top}F_{i}-\gamma^{2}I)w_{i}(t)\\
\nonumber&-\begin{bmatrix}
 x_{i}^{c}(t)\\
 x_{i}^{d}(t)
 \end{bmatrix}^{\top}
 \begin{bmatrix}
\Omega_{i}^{1}+\Omega_{i}^{2}  & -\Omega_{i}^{2}\\
 * & \Omega_{i}^{2}
 \end{bmatrix}
 \begin{bmatrix}
 x_{i}^{c}(t)\\
 x_{i}^{d}(t)
 \end{bmatrix}\\
\nonumber&+(T_{i}^{\top}L_{i}\dot{x}_{i}(t)+T_{i}^{\top}H_{i}x_{i}(t)+T_{i}^{\top}M_{i}x_{i}^{c}(t)+T_{i}^{\top}N_{i}x_{i}^{d}(t))^{\top}\\
\nonumber&\times(-\dot{x}_{i}(t)+\bar{A}_{ii}x_{i}(t)+\bar{A}_{ij} x_{i}^{c}(t)+\bar{K}_{ij}x_{i}^{d}(t)+E_{i}w_{i}(t))\\
\nonumber&+(-\dot{x}_{i}(t)+\bar{A}_{ii}x_{i}(t)+\bar{A}_{ij} x_{i}^{c}(t)+\bar{K}_{ij}x_{i}^{d}(t)+E_{i}w_{i}(t))^{\top}\\
\nonumber&\times (T_{i}^{\top}L_{i}\dot{x}_{i}(t)+T_{i}^{\top}H_{i}x_{i}(t)+T_{i}^{\top}M_{i}x_{i}^{c}(t)+T_{i}^{\top}N_{i}x_{i}^{d}(t))\\
=&\eta_{i}^{\top}(t)\bar{\Theta}_{i}\eta_{i}(t),
\end{align}
where $\eta_{i}(t)=[\dot{x}_{i}^{\top}(t),x_{i}^{\top}(t), (\xi_{i}^{a})^{\top}(t),x_{i}^{\top}(t-\tau_{i}),(\xi_{i}^{b})^{\top}(t),\\(x_{i}^{c})^{\top}(t),
(x_{i}^{d})^{\top}(t),w_{i}^{\top}(t)]^{\top}$,
\setlength{\arraycolsep}{1.5pt}\begin{align*}
\bar{\Theta}_{i}&= \begin{bmatrix}
\Theta_{i}^{11} & \bar{\Theta}_{i}^{12} & \mathbf{0}                  & \mathbf{0}  & \mathbf{0} & \Theta_{i}^{16} &  \bar{\Theta}_{i}^{17} & \Theta_{i}^{18} \\
              *    & \bar{\Theta}_{i}^{22} & -\bar{\Psi}_{i}^{12} & -\bar{\Psi}_{i}^{13}  & -\bar{\Psi}_{i}^{14} & \bar{\Theta}_{i}^{26} & \bar{\Theta}_{i}^{27}& \Theta_{i}^{28} \\
 *     & *        & -\bar{\Psi}_{i}^{22} & -\bar{\Psi}_{i}^{23}  & -\bar{\Psi}_{i}^{24} & \mathbf{0} & \mathbf{0}& \mathbf{0} \\
 *                      & *                    & * &   \Theta_{i}^{44}  & -\bar{\Psi}_{i}^{34} & \mathbf{0} & \mathbf{0}& \mathbf{0} \\
  *        & *                    & * &    *  & -\bar{\Psi}_{i}^{44} & \mathbf{0} & \mathbf{0}& \mathbf{0}\\
  *                     & *                     & * &  * &  * & \Theta_{i}^{66} & \bar{\Theta}_{i}^{67}& \Theta_{i}^{68}\\
*                      & *                       & * &  * &  * &  * & \bar{\Theta}_{i}^{77}& \Theta_{i}^{78}\\
*                      & *                       & * &  * &  * &  * & *& \Theta_{i}^{88}\\
\end{bmatrix},\\
\bar{\Theta}_{i}^{12}&=P_{i} +L_{i}^{\top}T_{i}(A_{ii}+B_{i}K_{ii})-T_{i}^{\top}H_{i},\\
\bar{\Theta}_{i}^{17}&=L_{i}^{\top}T_{i}B_{i}[K_{ii_{1}}\cdots K_{ii_{d_{i}}}]-T_{i}^{\top}N_{i},\\
\bar{\Theta}_{i}^{22}&={\rm He}\{H_{i}^{\top}T_{i}(A_{ii}+B_{i}K_{ii})\}-\bar{\Psi}_{i}^{11}\\
&~~~+2\alpha_{i}P_{i}+Q_{i}+C_{i}^{\top}C_{i},\\
\bar{\Theta}_{i}^{26}&= H_{i}^{\top}T_{i}[A_{ii_{1}} \cdots A_{ii_{d_{i}}}]+(A_{ii}+B_{i}K_{ii})^{\top}T_{i}^{\top}M_{i},\\
\bar{\Theta}_{i}^{27}&= H_{i}^{\top}T_{i}B_{i}[K_{ii_{1}} \cdots K_{ii_{d_{i}}}]+(A_{ii}+B_{i}K_{ii})^{\top}T_{i}^{\top}N_{i},\\
\bar{\Theta}_{i}^{67}&=\Omega_{i}^{2}+M_{i}^{\top}T_{i}B_{i}[K_{ii_{1}} \cdots K_{ii_{d_{i}}}]\\
&~~~+[A_{ii_{1}} \cdots A_{ii_{d_{i}}}]^{\top}T_{i}^{\top}N_{i},\\
\bar{\Theta}_{i}^{77}&=-\Omega_{i}^{2}+ {\rm He}\{N_{i}^{\top}T_{i}B_{i}[K_{ii_{1}} \cdots K_{ii_{d_{i}}}]\}.
\end{align*}

From $\Theta_{i}<0$ in \eqref{23}, we can deduce that $\Theta_{i}^{11}<0$, which implies that ${\rm He}\{L_{i}^{\top}T_{i}\}>0$. Note the matrix $T_{i}$ is invertible, we obtain that $L_{i}$ is invertible. Then, the invertibility of $L_{i}$ implies that $U_{i}$ is invertible due to the structure of $L_{i}$. It follows from \eqref{20}, \eqref{9925} and the structure of $L_{i},\bar{X}_{i}$ that

\begin{align}
\nonumber L_{i}^{\top}T_{i}B_{i}K_{ii}=&\begin{bmatrix}
U_{i}^{\top} & (L_{i}^{21})^{\top}\\
\mathbf{0} & (L_{i}^{22})^{\top}
\end{bmatrix}\begin{bmatrix}
I\\
\mathbf{0}
\end{bmatrix}K_{ii}=\begin{bmatrix}
U_{i}^{\top}K_{ii}\\
\mathbf{0}
\end{bmatrix}\\
\label{31}=&\begin{bmatrix}
X_{i}\\
\mathbf{0}
\end{bmatrix}=\bar{X}_{i},~~i\in \mathbb{N}.
\end{align}
\begin{align}
\nonumber L_{i}^{\top}T_{i}B_{i}[K_{ii_{1}}\cdots K_{ii_{d_{i}}}]=&\begin{bmatrix}
U_{i}^{\top} & (L_{i}^{21})^{\top}\\
\mathbf{0} & (L_{i}^{22})^{\top}
\end{bmatrix}\begin{bmatrix}
I\\
\mathbf{0}
\end{bmatrix}[K_{ii_{1}}\cdots K_{ii_{d_{i}}}]\\
\nonumber=&\begin{bmatrix}
U_{i}^{\top}[K_{ii_{1}}\cdots K_{ii_{d_{i}}}]\\
\mathbf{0}
\end{bmatrix}=\begin{bmatrix}
Z_{i}\\
\mathbf{0}
\end{bmatrix}\\
\label{32}=&\bar{Z}_{i},~~i\in \mathbb{N}.
\end{align}
Similarly, the following relations also hold for all $i\in \mathbb{N}$,
\begin{align}
H_{i}^{\top}T_{i}B_{i}K_{ii}&=\bar{X}_{i}, ~~H_{i}^{\top}T_{i}B_{i}[K_{ii_{1}}\cdots K_{ii_{d_{i}}}]=\bar{Z}_{i}\\
M_{i}^{\top}T_{i}B_{i}K_{ii}&=\bar{X}_{i}, ~~M_{i}^{\top}T_{i}B_{i}[K_{ii_{1}}\cdots K_{ii_{d_{i}}}]=\bar{Z}_{i},\\
\label{eq40} N_{i}^{\top}T_{i}B_{i}K_{ii}&=\bar{X}_{i}, ~~N_{i}^{\top}T_{i}B_{i}[K_{ii_{1}}\cdots K_{ii_{d_{i}}}]=\bar{Z}_{i}.
\end{align}

Substituting \eqref{31}-\eqref{eq40} into $\bar{\Theta}_{i}$ leads to
\begin{align}\label{9300}
\bar{\Theta}_{i}=\Theta_{i}<0.
\end{align}
Here, the last ``$<$'' holds because of \eqref{23}.

It follows from \eqref{936} and \eqref{9300} that
\begin{align}
\nonumber&\dot{V}_{i}(t)+2\alpha_{i}V_{i}(t)
+\sum_{j\in \mathbb{N}_{i}}\tau_{j}^{2}\dot{x}_{i}^{\top}(t)W_{i}\dot{x}_{i}(t)\\
\nonumber&-\sum_{j\in \mathbb{N}_{i}}h_{j}V_{j}(t)-\sum_{j\in \mathbb{N}_{i}}\mu_{j}(x_{j}(t),x_{j}(t_{k-v_{j}^{k,i}+1}))\\
\nonumber&+z_{i}^{\top}(t)z_{i}(t)-\gamma^{2}w_{i}^{\top}(t)w_{i}(t)<0.
\end{align}
 Therefore, the closed-loop system \eqref{4} is exponentially stable and has $\mathcal{L}_{2}$-gain less than $\gamma$ according to Theorem \ref{Theorem1}.
\end{proof}

\begin{remark}
Note the equation \eqref{22}, the addition of the left hand side of \eqref{22} into the right hand side of ``$\leq$" in \eqref{936} does not change the ``$\leq$" in \eqref{936}. However, by applying this technique, we have introduced four auxiliary matrices $L_{i},H_{i},M_{i},N_{i}$ $(i\in \mathbb{N})$ to reduce the conservatism of the results in Theorem \ref{Theorem2}. Similar techniques also can be seen in \cite{ugrinovskii2014Round,liu2012wirtinger}.
\end{remark}

\begin{remark}\label{remark4}
Comparing with the literatures \cite{massioni2009distributed},\cite{massioni2014distributed,hoffmann2013distributed,wu2017cooperative}, Theorem 1 and Theorem 2 can be applied to deal with both homogeneous and heterogeneous systems. For instance, only the homogeneous cases were studied in \cite{massioni2009distributed}, and some special heterogeneous cases, such as, ``$\alpha$-heterogeneous systems'', ``decomposable systems'' and multi-agent systems were concerned in \cite{massioni2014distributed},\cite{hoffmann2013distributed} and \cite{wu2017cooperative}, respectively.
\end{remark}

\begin{remark}
The minimum value of the $\mathcal{L}_{2}$-gain $\gamma$ is of interest in many applications, it can be obtained by solving the following optimization problem:
\begin{align}\label{1000}
\begin{split}
\min & \quad \gamma^{2}\\
{\rm subject ~ to} & \quad  \eqref{5},~\eqref{23}.
\end{split}
\end{align}
\end{remark}

\begin{remark}
The distributed controller gains can be obtained by solving a set of LMIs of Theorem \ref{Theorem2} off-line. This requires some centralized information such as ``$A_{ij},P_{i},Q_{i},R_{i}$'' $(i\in \mathbb{N},j\in \mathbb{N}_{i})$. Nonetheless, once the controller gains are designed, the implementation is fully distributed. Each sub-controller only requires local information and information from neighbors to form its control input.
\end{remark}

It should be noted that for each $B_{i}$ $(i\in \mathbb{N})$, there may exist different choices of $T_{i}$ satisfying \eqref{20}. The following theorem shows that the feasibility of the conditions of Theorem \ref{Theorem2} is independent of the choices of $T_{i}$.

\begin{theorem}\label{Theorem3}
If the LMIs conditions \eqref{23} of Theorem \ref{Theorem2} are feasible for some $T_{i}$ satisfying \eqref{20}, then they are feasible for any $\hat{T}_{i}$ satisfying \eqref{20}.
\end{theorem}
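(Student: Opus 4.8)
The plan is to prove this by an explicit substitution of the decision variables, making precise the intuition that feasibility is a property of the data $(A_{ij},B_i,C_i,E_i,F_i,\alpha_i,h_i,\Delta,\gamma)$ rather than of the particular left inverse of $B_i$ that is chosen. Assume \eqref{5} and \eqref{23} hold for some invertible $T_i$ $(i\in\mathbb N)$ satisfying \eqref{20}, with decision variables $P_i,Q_i,R_i,W_i,G_i,U_i,L_i^{21},L_i^{22},H_i^{21},H_i^{22},M_i^{21},M_i^{22},N_i^{21},N_i^{22},X_i,Z_i$, and let $\hat T_i$ be any invertible matrix satisfying \eqref{20}. For each $i$ put $\Phi_i:=T_i\hat T_i^{-1}$. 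Since $\Phi_i\hat T_iB_i=T_iB_i=\begin{bmatrix}I\\\mathbf 0\end{bmatrix}$ by \eqref{20} and $\hat T_iB_i=\begin{bmatrix}I\\\mathbf 0\end{bmatrix}$, one gets $\Phi_i\begin{bmatrix}I\\\mathbf 0\end{bmatrix}=\begin{bmatrix}I\\\mathbf 0\end{bmatrix}$, so $\Phi_i=\begin{bmatrix}I&\Phi_i^{12}\\\mathbf 0&\Phi_i^{22}\end{bmatrix}$; being a product of invertible matrices, $\Phi_i$ (hence $\Phi_i^{22}$) is invertible, and $\Phi_i^{\top}=\begin{bmatrix}I&\mathbf 0\\(\Phi_i^{12})^{\top}&(\Phi_i^{22})^{\top}\end{bmatrix}$ is block lower‑triangular with identity $(1,1)$‑block.

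Next I would construct a feasible tuple for $\hat T_i$ by keeping $P_i,Q_i,R_i,W_i,G_i,U_i,X_i,Z_i$ unchanged — so that $\hat R_i$, $\bar X_i$, $\bar Z_i$, $\Omega_i^1$, $\Omega_i^2$, $\bar\Psi_i$ and, through \eqref{9925}, the recovered controller gains are all unchanged — and replacing the slack variables by $\hat L_i:=\Phi_i^{\top}L_i$, $\hat H_i:=\Phi_i^{\top}H_i$, $\hat M_i:=\Phi_i^{\top}M_i$, $\hat N_i:=\Phi_i^{\top}N_i$. Because $\Phi_i^{\top}$ is block lower‑triangular with identity $(1,1)$‑block, left‑multiplying any matrix of the form $\begin{bmatrix}U_i&\mathbf 0\\ \ast&\ast\end{bmatrix}$ by $\Phi_i^{\top}$ returns a matrix of the same block form with the \emph{same} $(1,1)$‑block $U_i$; thus $\hat L_i,\hat H_i,\hat M_i,\hat N_i$ have exactly the structure demanded in Theorem~\ref{Theorem2}, and — this is the crucial point — preserving $U_i$ is what makes the controller recovered from \eqref{9925} for $\hat T_i$ identical to the one for $T_i$. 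Constraint \eqref{5} involves neither $T_i$ nor any slack variable, so it holds unchanged.

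The last step is to verify that \eqref{23} is literally preserved. The key identities are $\hat L_i^{\top}\hat T_i=L_i^{\top}(\Phi_i\hat T_i)=L_i^{\top}T_i$ and, by transposition, $\hat T_i^{\top}\hat L_i=T_i^{\top}L_i$, together with their analogues for $H_i,M_i,N_i$. Inspecting the blocks $\Theta_i^{uv}$ in \eqref{23}, the matrix $T_i$ appears only inside the products $L_i^{\top}T_i,\ H_i^{\top}T_i,\ M_i^{\top}T_i,\ N_i^{\top}T_i$ and their transposes — possibly multiplied on the right by the fixed matrices $A_{ii}$, $[A_{ii_1}\ \cdots\ A_{ii_{d_i}}]$, $E_i$ or on the left by their transposes — and each such product is invariant under the substitution, while $\bar\Psi_i^{uv}$, $P_i$, $Q_i$, $R_i$, $W_i$, $\Omega_i^1$, $\Omega_i^2$, $\bar X_i$, $\bar Z_i$, $C_i$, $F_i$ and $\gamma$ are all unchanged. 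Hence the matrix assembled from the new variables and $\hat T_i$ equals $\Theta_i<0$, so \eqref{5} and \eqref{23} are feasible for $\hat T_i$, which is the claim. I expect the only effortful part to be the routine block‑by‑block check confirming that $T_i$ indeed enters \eqref{23} solely through those four products and their transposes; there is no Lyapunov argument, inequality manipulation, or Schur‑complement step involved, since the statement is a pure change‑of‑variables (congruence‑type) fact.
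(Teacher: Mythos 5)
Your proposal is correct and follows essentially the same route as the paper: you set $\Phi_i=T_i\hat T_i^{-1}$ (the paper's $J_i$), exploit \eqref{20} to get its block upper-triangular form with identity $(1,1)$-block, and replace the slack variables by $\hat L_i=\Phi_i^{\top}L_i$ (and analogues), which is exactly the paper's $\hat L_i^{\top}=L_i^{\top}J_i$, preserving both the required block structure with the same $U_i$ and the cross terms $L_i^{\top}T_i=\hat L_i^{\top}\hat T_i$. The only cosmetic difference is that you additionally note the recovered controller gains are unchanged, which the paper leaves implicit.
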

\begin{proof}
Since $T_{i}$ and $\hat{T}_{i}$ satisfy \eqref{20}, we obtain that
\begin{align}\label{37}
 \begin{bmatrix}
I\\ \mathbf{0}\end{bmatrix}=T_{i}\hat{T}_{i}^{-1}\begin{bmatrix}
I\\ \mathbf{0}\end{bmatrix}.
\end{align}
 Denote
\begin{align*}
J_{i}=T_{i}\hat{T}_{i}^{-1}=\begin{bmatrix}J_{i}^{11} & J_{i}^{12}\\
J_{i}^{21} & J_{i}^{22}
\end{bmatrix},
\end{align*}
it follows from \eqref{37} that $J_{i}^{11}=I$ and $J_{i}^{21}=\mathbf{0}.$ Consider
\begin{align}
\nonumber L_{i}^{\top}T_{i}&=\begin{bmatrix}
U_{i}^{\top} & (L_{i}^{21})^{\top}\\
\mathbf{0} & (L_{i}^{22})^{\top}
\end{bmatrix}J_{i}\hat{T}_{i}=\begin{bmatrix}
U_{i}^{\top} & (L_{i}^{21})^{\top}\\
\mathbf{0} & (L_{i}^{22})^{\top}
\end{bmatrix}\begin{bmatrix}
I & J_{i}^{12}\\
\mathbf{0} & J_{i}^{22}
\end{bmatrix}\hat{T}_{i}\\
\nonumber&=\begin{bmatrix}
U_{i}^{\top} & (\hat{L}_{i}^{21})^{\top}\\
\mathbf{0} & (\hat{L}_{i}^{22})^{\top}
\end{bmatrix}\hat{T}_{i},
\end{align}
where $(\hat{L}_{i}^{21})^{\top}=U_{i}^{\top}J_{i}^{12}+(L_{i}^{21})^{\top}J_{i}^{22}$, $(\hat{L}_{i}^{22})^{\top}=(L_{i}^{22})^{\top}J_{i}^{22}$.
Then we get $L_{i}^{\top}T_{i}=\hat{L}_{i}^{\top}\hat{T}_{i}$ with
\begin{align*}
\hat{L}_{i}=\begin{bmatrix}
U_{i} & \mathbf{0} \\
 \hat{L}_{i}^{21}&  \hat{L}_{i}^{22}
\end{bmatrix}.
\end{align*}

Similarly, the matrices $\hat{H}_{i}$ $(i\in \mathbb{N})$, $\hat{M}_{i}$ and $\hat{N}_{i}$ also can be found such that \begin{align*}
H_{i}^{\top}T_{i}=\hat{H}_{i}^{\top}\hat{T}_{i},~~M_{i}^{\top}T_{i}=\hat{M}_{i}^{\top}\hat{T}_{i},~~ N_{i}^{\top}T_{i}=\hat{N}_{i}^{\top}\hat{T}_{i}.
\end{align*}
Note that the matrix $T_{i}$ $(i\in \mathbb{N})$ only appears in the
cross terms ``$L_{i}^{\top}T_{i},H_{i}^{\top}T_{i},M_{i}^{\top}T_{i},N_{i}^{\top}T_{i}$'' or their transpositions in \eqref{23}. Therefore, if \eqref{23} is feasible for $T_{i}$ satisfying \eqref{20}, then \eqref{23} is feasible for any $\hat{T}_{i}$ satisfying \eqref{20}.
\end{proof}

\section{Numerical Examples}

In this section, the effectiveness of the proposed method are demonstrated by three numerical examples.

 \begin{example}\label{example2}
This example is used to compare our distributed controller \eqref{2} with the controller in \cite{ghadami2013decomposition} under different system parameters.
\begin{figure}[bht]
\centering
\includegraphics[width=0.35\textwidth]{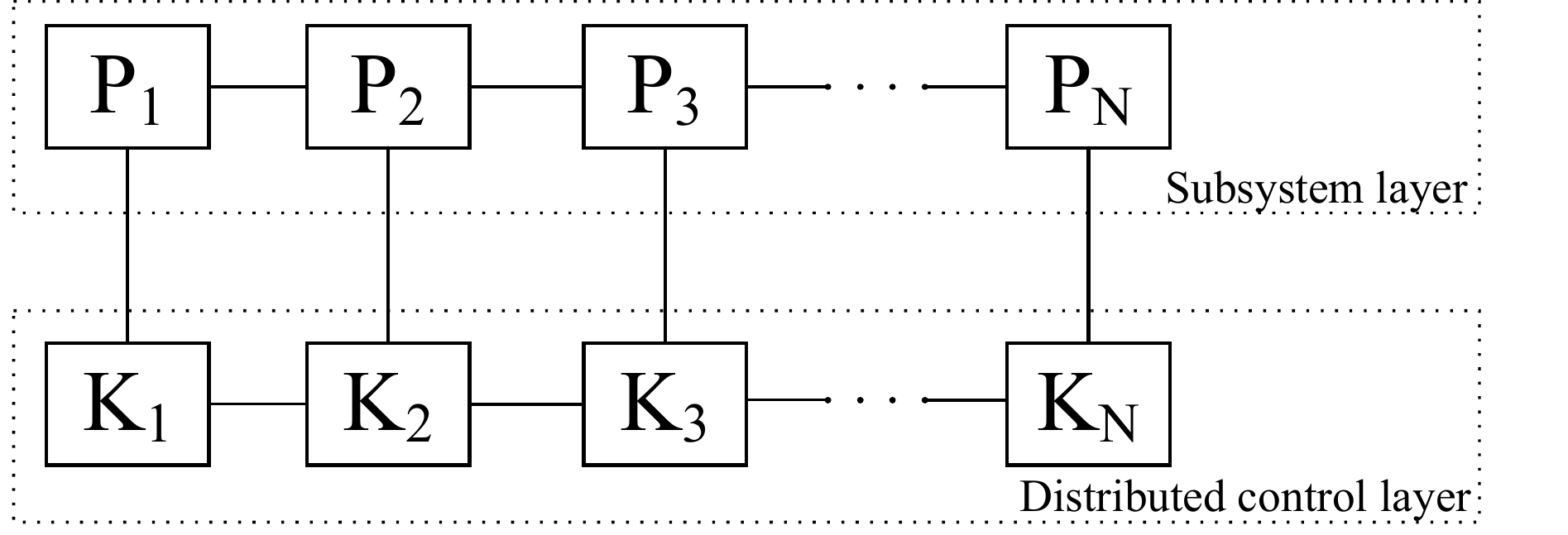}
\caption{Interconnection of the large-scale system.}
\label{figure1}
  \end{figure}

Consider a large-scale system in Fig.~\ref{figure1} with $N=10$ subsystems. The dynamics of the $i$th $(i\in \mathbb{N})$ subsystem is described by equation \eqref{1}, where
\begin{align}
\label{eq43} A_{a}&=\begin{bmatrix}   
-0.7+a  & -0.1\\
0  & -0.8+0.1a
  \end{bmatrix},\quad
 A_{ii}=A_{a}+d_{i} A_{b} , \\
\label{eq44}  A_{b}&=\begin{bmatrix}   
 -0.2 & -0.1+0.2a\\
  0 & -0.1
  \end{bmatrix},\quad A_{ij}=-A_{b},\quad F_{i}=2,\\
\label{eq45}B_{i}&=\begin{bmatrix}
-0.4\\
0.1
  \end{bmatrix},
  \quad
 E_{i}=\begin{bmatrix}   
 0.1\\
  -0.2
 \end{bmatrix}, \quad
C_{i} =\begin{bmatrix}
0.1 &  0
  \end{bmatrix}.
\end{align}
Here, the symbol $a$ is a parameterised scalar, which will
take different values in $\{-0.4,-0.35,...,0.4\}$ for comparison purpose.

This large-scale system is a ``decomposable system'' \cite{ghadami2013decomposition} when the graph Laplacian matrix $L$ is chosen as the ``pattern matrix'' \cite{ghadami2013decomposition}, where
\begin{equation*}
L=\begin{bmatrix}   
    1  &   -1   &  0 &    \cdots &    0 & 0\\
    -1 &    2   & -1 &   \cdots &   0  & 0\\
    \vdots &    \vdots   & \vdots &  \ddots & \vdots & \vdots\\
    0 &     0   & 0&      \cdots  & -1 &  1
  \end{bmatrix}.
\end{equation*}

Now we compare the value $\gamma_{\rm min}$ achieved by the distributed controller \eqref{2} and the controller in \cite{ghadami2013decomposition}. In the optimization problem \eqref{1000}, we choose $\Delta=0.0005$, $h_{i}=0.1$ and $\alpha_{i}=0.4$ $(i\in \mathbb{N})$ in the constraint \eqref{23}. By solving the problem \eqref{1000} with 20 LMIs restrictions
, the comparison result is shown in Fig.~\ref{figure2}.

As Fig.~\ref{figure2} illustrates, the $\mathcal{L}_{2}$-gain $\gamma_{\rm min}$ achieved by the distributed controller \eqref{2} increased slightly compared with that by the controller in \cite{ghadami2013decomposition} when
$a\in \{-0.4,-0.35,...,0.4\}$. However, the distributed controller \eqref{2} leads to about 50$\%$ bandwidth savings, because each sub-controller interacts with only one neighbor at each instant under Round-Robin communication protocol.
\begin{figure}[bht]
\centering
\includegraphics[width=0.46\textwidth]{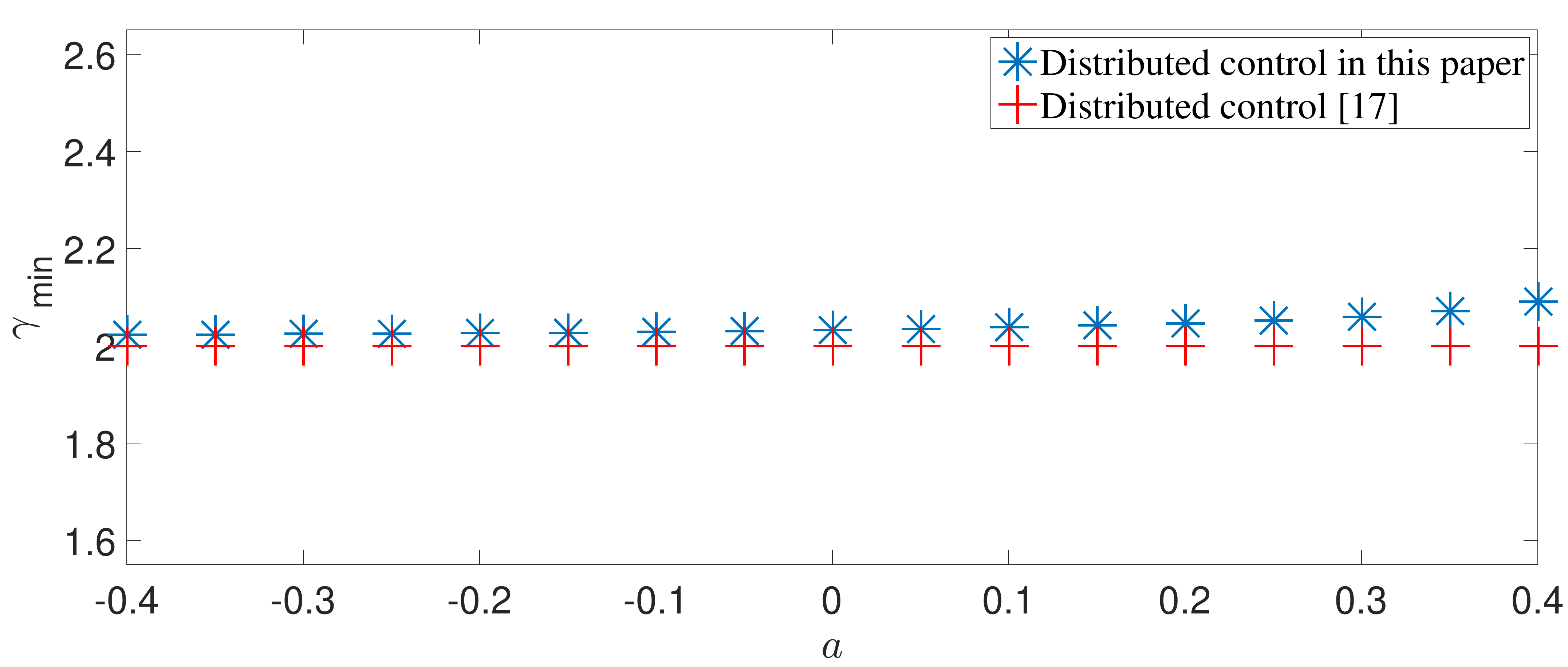}
\caption{$\mathcal{L}_{2}$-gain with respect to the
parameter $a$.}
\label{figure2}
\end{figure}
 \end{example}

\begin{example}
This example is used to compare our distributed controller \eqref{2} with the controller in \cite{ghadami2013decomposition} under different number of subsystems.

Consider a large-scale system in Fig.~\ref{figure1} with $N$ subsystems. The system parameters are given in \eqref{eq43}-\eqref{eq45} with $a=0$, the values of $\alpha_{i}$ $(i\in \mathbb{N})$, $h_{i}$ and
$\Delta$ are chosen the same as those in Example \ref{example2}. In order to obtain the minimal $\mathcal{L}_{2}$-gain $\gamma_{\rm min}$ achieved by our controllers, we need to solve the optimization problem \eqref{1000} with $2N$ LMIs restrictions. The comparison result is shown in Fig.~\ref{figure3}.
\begin{figure}[bht]
\centering
\includegraphics[width=0.46\textwidth]{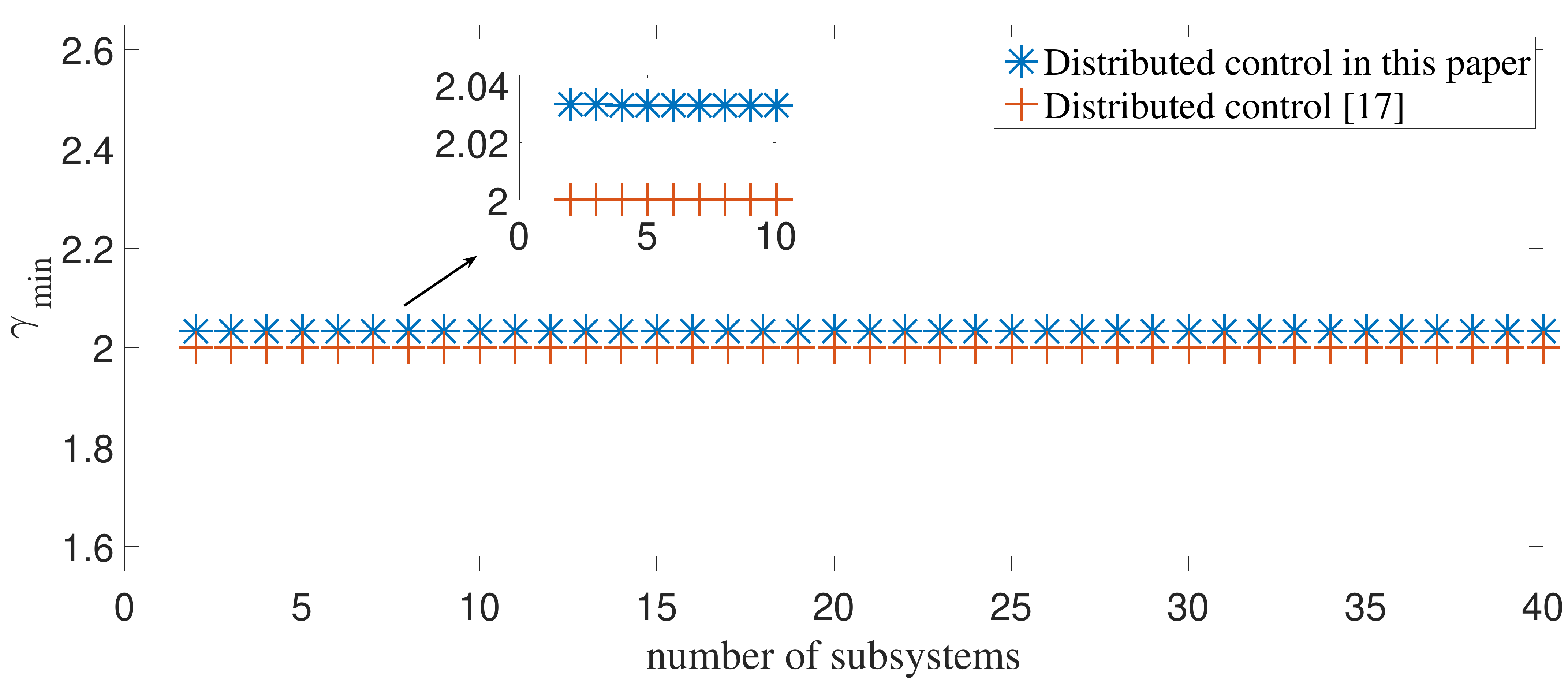}
\caption{$\mathcal{L}_{2}$-gain for different number of subsystems.}
\label{figure3}
\end{figure}

From Fig.~\ref{figure3}, it is interesting to observe that the minimal $\mathcal{L}_{2}$-gain $\gamma_{\rm min}$ achieved by the controller \eqref{2} changes slightly under different number of subsystems. The reason is that this example considers identical subsystems and interconnections.

The $\mathcal{L}_{2}$-gain $\gamma_{\rm min}$ achieved by the controller \eqref{2} is about $\gamma_{\rm min}$=2.0311 under different number of subsystems, while the value achieved by the controller in \cite{ghadami2013decomposition} is about $\gamma_{\rm min}$=2.0001. The value of $\gamma_{\rm min}$
achieved by our controllers increases by 1.55$\%$, but Round-Robin communication protocol leads
to about 50$\%$ bandwidth savings. This
example
shows that the distributed controllers subject
to Round-Robin communication protocol is more attractive when the network bandwidth is limited.
\end{example}

\begin{example}
This example considers a large-scale system in Fig.~\ref{figure1} with 100 heterogeneous subsystems. The results in \cite{ghadami2013decomposition} are not applicable because this large-scale system is not a ``decomposable system''.  The dynamics of the $i$th subsystem is described by equation \eqref{1}, where
\begin{align*}
A_{ii}&=\setlength{\arraycolsep}{0.5pt}\left\{
  \begin{array}{ll}
    \begin{bmatrix}
  \frac{2+{\rm mod}(i,3)}{10} & \frac{i}{5i+1}\\
   0 &  \frac{{\rm mod}(i,4)-15}{10}
  \end{bmatrix}, & \hbox{if $i=4m-1,m\in \mathbb{N}^{+}$};\\
   \begin{bmatrix}
   \frac{{\rm mod}(i,3)-20}{10} & \frac{i}{5i+1}\\
   0 &   \frac{{\rm mod}(i,4)-15}{10}
  \end{bmatrix}, & \hbox{otherwise},
  \end{array}
\right.\\
 A_{ij}&=\begin{bmatrix}   
 \frac{{\rm mod}(i,3)}{10} & \frac{{\rm mod}(i,4)}{10}\\
   0 & 0.1
  \end{bmatrix},~
  E_{i}=\begin{bmatrix}   
    \frac{{\rm mod}(i,6)}{10}\\
   0
  \end{bmatrix},\\
  B_{i}&=\begin{bmatrix}  
    -0.4\\
    0.1
  \end{bmatrix},~
  F_{i}= \frac{{\rm mod}(i,8)}{10},~
  C_{i}=\begin{bmatrix}
   \frac{{\rm mod}(i,5)}{10}  & 0.1
  \end{bmatrix}.
 \end{align*}

The compact form of the large-scale system is given by:
 \begin{align*}
 \dot{x}(t) = Ax(t)+Bu(t)+Ew(t),
 \end{align*}
 where $x(t)=[x_{1}^{\top}(t)\cdots x_{100}^{\top}(t)]^{\top}$, and $u(t),w(t)$ are defined accordingly,
\begin{align*}
A&=\begin{bmatrix}
   A_{1,1} & A_{1,2} & \mathbf{ 0}             & \cdots              & \mathbf{ 0}           &\mathbf{ 0} \\
   A_{2,1} & A_{2,2} & A_{2,3}         & \cdots              & \mathbf{ 0}           & \mathbf{ 0} \\
    \vdots & \vdots &  \vdots &        \ddots     & \vdots      & \vdots\\
  \mathbf{ 0}       & \mathbf{ 0}      & \mathbf{ 0}             & \cdots             &  A_{100,99}  & A_{100,100}
  \end{bmatrix},\\
  B&={\rm diag}(B_{1},\ldots,B_{100}),~
  E={\rm diag}(E_{1},\ldots, E_{100}).
\end{align*}

Note that for $i=4m-1$ $(m\in \mathbb{N}^{+})$, some eigenvalues of the matrix $A_{ii}$ locate  in the right half of the complex plane. Also, it can  be verified  that  12.5$\%$ eigenvalues of the matrix $A$ are in the right half of the complex plane. That is, the large-scale system is open-loop  unstable.

Under Round-Robin communication protocol, the suboptimal distributed controller \eqref{2} can be designed based on the optimization problem \eqref{1000}. The values of $\alpha_{i}$ $(i\in \mathbb{N})$, $h_{i}$  and $\Delta$ are chosen the same as those in Example \ref{example2}. By solving the optimization problem \eqref{1000} with 200 LMIs restrictions, we obtained that $\gamma_{\rm min}$=3.9255, all sub-controllers $K_{ii},K_{ij}$ ($i\in \mathbb{N},~j\in \mathbb{N}_{i}$) can be seen in the Appendix.  Set initial conditions as $x_{i}(0)=[1-2i,~2i]^{\top}$ $(i\in \mathbb{N})$, the state responses of the closed-loop system are shown in Fig.~\ref{figure99}, which has conformed that the controlled large-scale system is exponentially stable.
\begin{figure}[bht]
\centering
\includegraphics[width=0.43\textwidth]{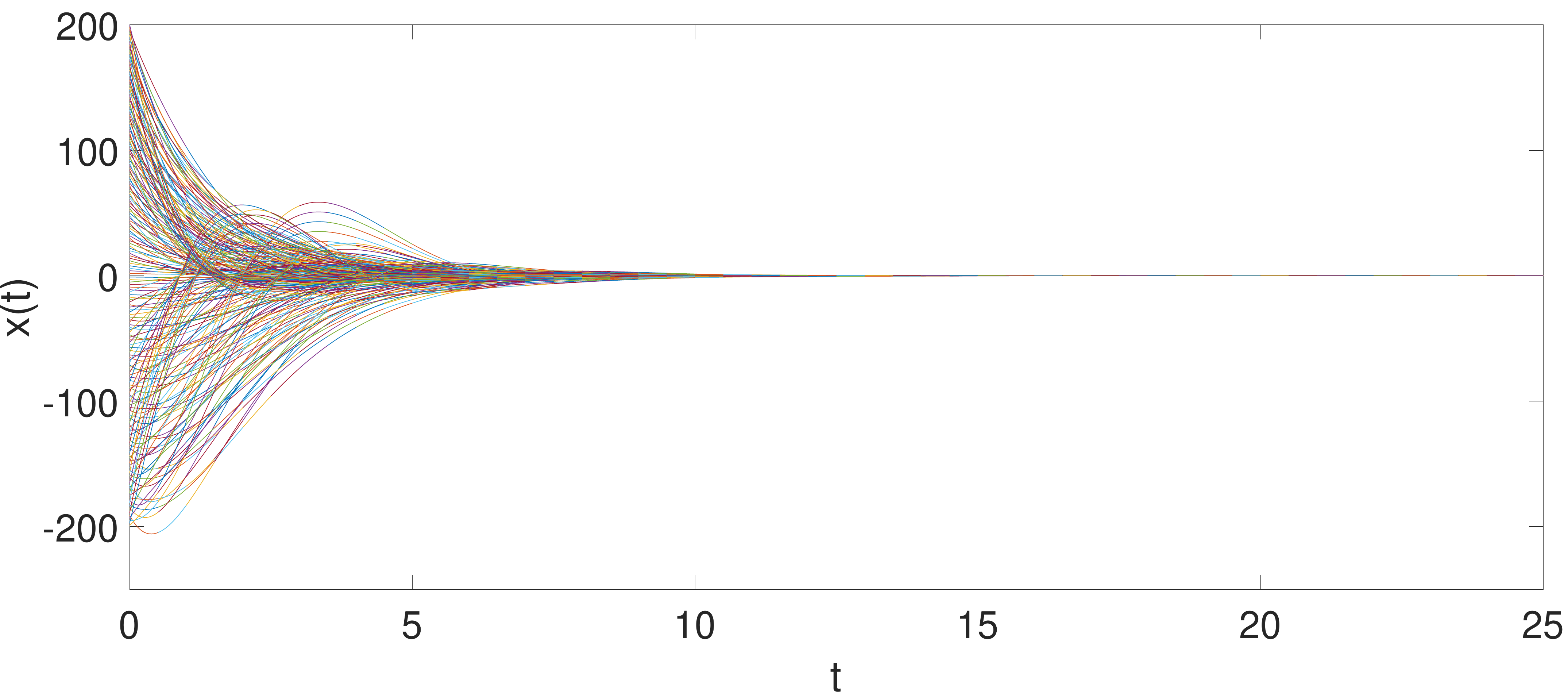}
\caption{State responses of the controlled large-scale system. }
\label{figure99}
\end{figure}
\end{example}

\section{Conclusions}
 This paper introduces a method to design distributed controllers for large-scale systems under  Round-Robin communication protocol.
 Based on a skillful partition of the time
interval $[0,T]$, a time-delay dependent approach has been introduced for the controller design and $\mathcal{L}_{2}$-gain analysis of the large-scale system. The distributed controller gains can be obtained by solving a set of LMIs. Finally, three numerical examples have demonstrated the validity of the proposed control scheme. Future work may involve the comparison of controller design and $\mathcal{L}_{2}$-gain of large-scale systems under different communication protocols, such as gossip communication protocol and try-once-discard communication protocol.



%

\section*{Appendix}
The distributed controllers obtained in Example 4 are given as follows:
  \newcounter{mytempeqncnt2}
\begin{figure*}[!bht]
\normalsize
\setcounter{mytempeqncnt2}{\value{equation}}
  \begin{align*}
[K_{1,1} \quad K_{1,2}]&=[-2.9881 \quad 0.4165\quad 	-1.4576\quad 	0.0325],\\
[K_{2,1} \quad K_{2,2}\quad K_{23}]&=[2.9873 \quad 	0.6547\quad 	-1.1356\quad 	0.556\quad 	0.5039\quad 	0.4013],\\
[K_{3,2} \quad K_{3,3}\quad K_{3,4}]&=[10.2471\quad	1.2361	\quad6.4269\quad	1.3454\quad	0.0023	\quad1.0177],\\
[K_{4,3} \quad K_{4,4}\quad K_{4,5}]&=[-6.8184\quad 0.232\quad	-3.538	\quad0.9847	\quad-0.2134	\quad0.2142	],\\
[K_{5,4} \quad K_{5,5}\quad K_{5,6}]&=[-10.2176\quad	-0.0347\quad	-7.1139\quad	0.8512\quad	-0.5442\quad	-0.0223],\\	
[K_{6,5} \quad K_{6,6}\quad K_{6,7}]&=[-2.4235\quad	-0.0609	\quad-2.3634\quad	-0.0137	\quad0.0001\quad	-0.0532],\\	
[K_{7,6} \quad K_{7,7}\quad K_{7,8}]&=[-26.8327\quad	-3.0094	\quad-28.9199\quad	-4.9757	\quad-1.1981\quad	-2.7055],\\
[K_{8,7} \quad K_{8,8}\quad K_{8,9}]&=[-2.6207	\quad0.0751\quad	-1.8963\quad	0.3957\quad	-0.1253\quad	0.0737],\\
[K_{9,8} \quad K_{9,9}\quad K_{9,10}]&=[-3.3166\quad	0.0389\quad	-3.681\quad	1.2635\quad	0.0001\quad	0.0657],\\
[K_{10,9} \quad K_{10,10}\quad K_{10,11}]&=[5.4765\quad	0.5221\quad	-0.7295\quad	0.556\quad	0.3024\quad	0.4221],\\
[K_{11,10} \quad K_{11,11}\quad K_{11,12}]&=[34.0197 \quad	3.9816\quad	22.5856\quad	5.269\quad	1.9759\quad	3.3009],\\
[K_{12,11} \quad K_{12,12}\quad K_{12,13}]&=[-1.5212\quad	0.0771\quad	-3.907\quad	0.2402\quad	0.0031\quad	0.0793],\\		
[K_{13,12} \quad K_{13,13}\quad K_{13,14}]&=[-17.3442\quad  -1.8496\quad	-9.7401\quad	3.8742\quad	-2.1945\quad	-0.8366],\\	
[K_{14,13} \quad K_{14,14}\quad K_{14,15}]&=[4.4353\quad	0.5791\quad	-1.6186\quad	0.6376\quad	0.4997\quad	0.4533],\\		
[K_{15,14} \quad K_{15,15}\quad K_{15,16}]&=[7.0064\quad	0.9356\quad	4.2748\quad	0.8026\quad	0.0016\quad	0.7724],\\			
[K_{16,15} \quad K_{16,16}\quad K_{16,17}]&=[-6.6066\quad	0.2202\quad	-3.4307\quad	0.8446\quad	-0.2084\quad	0.1914],\\	
[K_{17,16} \quad K_{17,17}\quad K_{17,18}]&=[-9.4721\quad	-0.0243\quad	-6.5963\quad	0.8454\quad	-0.4988\quad	-0.013],\\	
[K_{18,17} \quad K_{18,18}\quad K_{18,19}]&=[-2.4111\quad	-0.0608\quad	-2.3808\quad	-0.0183\quad	0.0001\quad	-0.0532],\\
[K_{19,18} \quad K_{19,19}\quad K_{19,20}]&=[-27.4712\quad	-3.0642\quad	-29.6985\quad	-5.1314\quad	-1.2608\quad	-2.7657],\\							
[K_{20,19} \quad K_{20,20}\quad K_{20,21}]&=[-2.4645\quad	0.0831\quad	-2.0535\quad	0.6284\quad	-0.1141\quad	0.0839],\\	
[K_{21,20} \quad  K_{21,21}\quad K_{21,22}]&=[-3.4489\quad 	-0.0132\quad 	-4.5794\quad 	1.5411\quad 	0\quad 	0.0547 ],\\		
[K_{22,21} \quad  K_{22,22}\quad K_{22,23}]&=[5.0341\quad 	0.5145\quad 	-1.0261\quad 	0.6615\quad 	0.2778\quad 	0.4248],\\		
[K_{23,22} \quad  K_{23,23}\quad K_{23,24}]&=[23.9335\quad 	2.8979\quad 	15.9821\quad 	3.4846\quad 	1.3469\quad 	2.3384],\\		
[K_{24,23} \quad  K_{24,24}\quad K_{24,25}]&=[-1.2408\quad 	0.0637\quad 	-4.3468\quad 	0.1721\quad 	0.0028\quad 	0.0645],\\		
[K_{25,24} \quad  K_{25,25}\quad K_{25,26}]&=[-19.3066\quad 	-3.1939\quad 	-20.4887\quad 	6.29\quad 	-3.7683\quad 	-1.583],\\	
[K_{26,25} \quad  K_{26,26}\quad K_{26,27}]&=[4.311\quad 	0.592\quad 	-1.6317\quad 	0.6286\quad 	0.4994\quad 	0.4548],\\		
[K_{27,26} \quad  K_{27,27}\quad K_{27,28}]&=[6.7997\quad 	0.9101\quad 	4.1681\quad 	0.7702\quad 	0.0016\quad 	0.7542],\\			
[K_{28,27} \quad  K_{28,28}\quad K_{28,29}]&=[-6.9866\quad 	0.2307\quad 	-3.6033\quad 	0.8703\quad 	-0.2239\quad 	0.2011],\\		
[K_{29,28} \quad  K_{29,29}\quad K_{29,30}]&=[-10.3709\quad 	-0.0359\quad 	-7.1703\quad 	0.797\quad 	-0.551\quad 	-0.023],\\		
[K_{30,29} \quad  K_{30,30}\quad K_{30,31}]&=[-2.4292\quad 	-0.0607\quad 	-2.3597\quad 	-0.0167\quad 	0.0001\quad 	-0.0532],\\	
[K_{31,30} \quad  K_{31,31}\quad K_{31,32}]&=[-26.9986\quad 	-3.0276\quad 	-29.0174\quad 	-5.054\quad 	-1.2043\quad 	-2.7202],\\	
[K_{32,31} \quad  K_{32,32}\quad K_{32,33}]&=[-2.6283\quad 	0.0751\quad 	-1.895\quad 	0.3903\quad 	-0.1258\quad 	0.0737],\\	
[K_{33,32} \quad  K_{33,33}\quad K_{33,34}]&=[-3.3259\quad 	0.0397\quad 	-3.6654\quad 	1.2573\quad 	0.0001\quad 	0.0659],\\	
[K_{34,33} \quad  K_{34,34}\quad K_{34,35}]&=[5.5037\quad 	0.5223\quad 	-0.7044\quad 	0.5604\quad 	0.3032\quad 	0.421],\\		
[K_{35,34} \quad  K_{35,35}\quad K_{35,36}]&=[37.8394\quad 	4.4204\quad 	25.0563\quad 	5.8795\quad 	1.7876\quad 	3.4566],\\	
[K_{36,35} \quad  K_{36,36}\quad K_{36,37}]&=[-0.7023\quad 	0.0285\quad 	-5.1411\quad 	0.0218\quad 	0.0017\quad 	0.029],\\		
[K_{37,36} \quad  K_{37,37}\quad K_{37,38}]&=[-21.9204\quad 	2.5976\quad 	45.8645\quad 	-5.9935\quad 	3.4251\quad 	1.9281],\\	
[K_{38,37} \quad  K_{38,38}\quad K_{38,39}]&=[3.7914\quad 	0.6048\quad 	-1.4015\quad 	0.6511\quad 	0.4988\quad 	0.4546],\\		
[K_{39,38} \quad  K_{39,39}\quad K_{39,40}]&=[6.9622\quad 	0.9241\quad 	4.3634\quad 	0.8188\quad 	0.0017\quad 	0.761],\\			
[K_{40,39} \quad  K_{40,40}\quad K_{40,41}]&=[-6.6236\quad 	0.2207\quad 	-3.4382\quad 	0.8357\quad 	-0.2093\quad 	0.1908],\\
[K_{41,40} \quad  K_{41,41}\quad K_{41,42}]&=[-9.4861\quad 	-0.0245\quad 	-6.6092\quad 	0.838\quad 	-0.5\quad 	-0.0133],\\
[K_{42,41} \quad  K_{42,42}\quad K_{42,43}]&=[-2.417\quad 	-0.0611\quad 	-2.3762\quad 	-0.0193\quad 	0.0001\quad 	-0.0534],\\		
[K_{43,42} \quad  K_{43,43}\quad K_{43,44}]&=[-26.51\quad 	-2.9523\quad 	-28.5949\quad 	-4.9447\quad 	-1.1916\quad 	-2.6658],\\	
[K_{44,43} \quad  K_{44,44}\quad K_{44,45}]&=[-2.6286\quad 	0.078\quad 	-1.9348\quad 	0.4502\quad 	-0.1246\quad 	0.0774],\\	
[K_{45,44} \quad  K_{45,45}\quad K_{45,46}]&=[-3.4629\quad 	0.0265\quad 	-4.0645\quad 	1.4109\quad 	0\quad 	0.0616],\\	
[K_{46,45} \quad  K_{46,46}\quad K_{46,47}]&=[5.3378\quad 	0.5212\quad 	-0.81\quad 	0.6161\quad 	0.2932\quad 	0.4284],
\end{align*}
\end{figure*}

\begin{figure*}[!bht]
\normalsize
\setcounter{mytempeqncnt2}{\value{equation}}
  \begin{align*}			
[K_{47,46} \quad  K_{47,47}\quad K_{47,48}]&=[29.7628\quad 	3.5448\quad 	19.8914\quad 	4.6098\quad 	1.7563\quad 	2.9094],\\		
[K_{48,47} \quad  K_{48,48}\quad K_{48,49}]&=[-1.5331\quad 	0.0805\quad 	-3.875\quad 	0.2498\quad 	0.0034\quad 	0.0816],\\			
[K_{49,48} \quad  K_{49,49}\quad K_{49,50}]&=[-12.8387\quad 	-1.3657\quad 	-7.5098\quad 	2.931\quad 	-1.5947\quad 	-0.5824],\\	
[K_{50,49} \quad  K_{50,50}\quad K_{50,51}]&=[4.3816\quad 	0.5849\quad 	-1.6544\quad 	0.6344\quad 	0.4996\quad 	0.4521],\\		
[K_{51,50} \quad  K_{51,51}\quad K_{51,52}]&=[6.9543\quad 	0.9238\quad 	4.2539\quad 	0.7912\quad 	0.0016\quad 	0.7624],\\			
[K_{52,51} \quad  K_{52,52}\quad K_{52,53}]&=[-7.0183\quad 	0.2389\quad 	-3.6328\quad 	0.9049\quad 	-0.2234\quad 	0.1991],\\	
[K_{53,52} \quad  K_{53,53}\quad K_{53,54}]&=[-11.4582\quad 	0.0975\quad 	-8.068\quad 	2.2294\quad 	-0.4915\quad 	0.0795],\\		
[K_{54,53} \quad  K_{54,54}\quad K_{54,55}]&=[-1.3728\quad 	0.0271\quad 	-3.1389\quad 	0.1263\quad 	-0.0014\quad 	0.0229],\\	
[K_{55,54} \quad  K_{55,55}\quad K_{55,56}]&=[12.5213\quad 	4.3966\quad 	29.5485\quad 	14.4651\quad 	1.5009\quad 	2.2636],\\	
[K_{56,55} \quad  K_{56,56}\quad K_{56,57}]&=[-1.8283\quad 	0.1593\quad 	-2.4882\quad 	1.4566\quad 	-0.044\quad 	0.1602],\\		
[K_{57,56} \quad  K_{57,57}\quad K_{57,58}]&=[-7.0562\quad 	0.0127\quad 	-12.9813\quad 	6.0704\quad 	-0.0003\quad 	0.5065],\\	
[K_{58,57} \quad  K_{58,58}\quad K_{58,59}]&=[4.7371\quad 	0.4145\quad 	-1.2642\quad 	0.6963\quad 	0.2544\quad 	0.3591],\\		
[K_{59,58} \quad  K_{59,59}\quad K_{59,60}]&=[13.4463\quad 	1.5894\quad 	8.7723\quad 	2.6072\quad 	0.5721\quad 	0.947],\\	
[K_{60,59} \quad  K_{60,60}\quad K_{60,61}]&=[-0.852\quad 	0.041\quad 	-5.0231\quad 	0.0092\quad 	0.0022\quad 	0.0441],\\	
[K_{61,60} \quad  K_{61,61}\quad K_{61,62}]&=[-5.3235\quad 	2.1274\quad 	53.2428\quad 	-10.6633\quad 	6.388\quad 	3.1399],\\	
[K_{62,61} \quad  K_{62,62}\quad K_{62,63}]&=[4.1144\quad 	0.6045\quad 	-1.533\quad 	0.6397\quad 	0.4993\quad 	0.457],\\		
[K_{63,62} \quad  K_{63,63}\quad K_{63,64}]&=[6.7766\quad 	0.9132\quad 	4.2089\quad 	0.789\quad 	0.0016\quad 	0.7524],\\			
[K_{64,63} \quad  K_{64,64}\quad K_{64,65}]&=[-6.628\quad 	0.2192\quad 	-3.4373\quad 	0.8245\quad 	-0.2098\quad 	0.1891],\\		
[K_{65,64} \quad  K_{65,65}\quad K_{65,66}]&=[-9.494\quad 	-0.025\quad 	-6.6168\quad 	0.8389\quad 	-0.5009\quad 	-0.0137],\\		
[K_{66,65} \quad  K_{66,66}\quad K_{66,67}]&=[-2.4209\quad 	-0.0613\quad 	-2.3731\quad 	-0.0196\quad 	0.0001\quad 	-0.0536],\\																						 [K_{67,66} \quad  K_{67,67}\quad K_{67,68}]&=[-26.3829\quad 	-2.9379\quad 	-28.411\quad 	-4.9161\quad 	-1.1808\quad 	-2.6527],\\	
[K_{68,67} \quad  K_{68,68}\quad K_{68,69}]&=[-2.6794\quad 	0.0773\quad 	-1.909\quad 	0.4112\quad 	-0.1287\quad 	0.0764],\\	
[K_{69,68} \quad  K_{69,69}\quad K_{69,70}]&=[-3.4884\quad 	0.0358\quad 	-3.8506\quad 	1.354\quad 	0.0001\quad 	0.0644],\\	
[K_{70,69} \quad  K_{70,70}\quad K_{70,71}]&=[5.5146\quad 	0.5234\quad 	-0.7174\quad 	0.5887\quad 	0.3014\quad 	0.431],\\		
[K_{71,70} \quad  K_{71,71}\quad K_{71,72}]&=[32.5948\quad 	3.8555\quad 	21.7134\quad 	5.1788\quad 	1.9238\quad 	3.1833],\\		
[K_{72,71} \quad  K_{72,72}\quad K_{72,73}]&=[-1.5517\quad 	0.0809\quad 	-3.8502\quad 	0.2529\quad 	0.0034\quad 	0.0822],\\		
[K_{73,72} \quad  K_{73,73}\quad K_{73,74}]&=[-13.5249\quad 	-1.4\quad 	-7.3738\quad 	2.9665\quad 	-1.6447\quad 	-0.5954],\\	
[K_{74,73} \quad  K_{74,74}\quad K_{74,75}]&=[4.4017\quad 	0.5811\quad 	-1.6477\quad 	0.6365\quad 	0.4996\quad 	0.4515],\\		
[K_{75,74} \quad  K_{75,75}\quad K_{75,76}]&=[6.9884\quad 	0.9269\quad 	4.2656\quad 	0.7956\quad 	0.0016\quad 	0.768],\\			
[K_{76,75} \quad  K_{76,76}\quad K_{76,77}]&=[-6.9674\quad 	0.2317\quad 	-3.6034\quad 	0.8836\quad 	-0.223\quad 	0.2031],\\			
[K_{77,76} \quad  K_{77,77}\quad K_{77,78}]&=[-10.3523\quad 	-0.0357\quad 	-7.1903\quad 	0.8117\quad 	-0.5506\quad 	-0.0227],\\	
[K_{78,77} \quad  K_{78,78}\quad K_{78,79}]&=[-2.4165\quad 	-0.0601\quad 	-2.3709\quad 	-0.0168\quad 	0.0001\quad 	-0.0526],\\	
[K_{79,78} \quad  K_{79,79}\quad K_{79,80}]&=[-27.61\quad 	-3.0968\quad 	-29.9205\quad 	-5.2211\quad 	-1.2532\quad 	-2.7825],\\	
[K_{80,79} \quad  K_{80,80}\quad K_{80,81}]&=[-2.4788\quad 	0.0791\quad 	-1.9923\quad 	0.5373\quad 	-0.115\quad 	0.0789],\\
[K_{81,80} \quad  K_{81,81}\quad K_{81,82}]&=[-3.2696\quad 	0.0092\quad 	-4.1932\quad 	1.3945\quad 	0\quad 	0.0606],\\	 	
[K_{82,81} \quad  K_{82,82}\quad K_{82,83}]&=[5.0672\quad 	0.515\quad 	-0.9945\quad 	0.6344\quad 	0.2825\quad 	0.4175],\\	
[K_{83,82} \quad  K_{83,83}\quad K_{83,84}]&=[26.3286\quad 	3.141\quad 	17.4066\quad 	3.8426\quad 	1.4524\quad 	2.5498],\\
[K_{84,83} \quad  K_{84,84}\quad K_{84,85}]&=[-1.1965\quad 	0.0595\quad 	-4.4272\quad 	0.156\quad 	0.0027\quad 	0.0611],\\		
[K_{85,84} \quad  K_{85,85}\quad K_{85,86}]&=[-35.7404\quad 	-6.6442\quad 	-44.7138\quad 	12.869\quad 	-7.9382\quad 	-3.4925],\\	
[K_{86,85} \quad  K_{86,86}\quad K_{86,87}]&=[4.3246\quad 	0.5914\quad 	-1.6145\quad 	0.6381\quad 	0.4995\quad 	0.4564],\\		
[K_{87,86} \quad  K_{87,87}\quad K_{87,88}]&=[6.8172\quad 	0.9182\quad 	4.2042\quad 	0.7901\quad 	0.0016\quad 	0.7574],\\				
[K_{88,87} \quad  K_{88,88}\quad K_{88,89}]&=[-6.6285\quad 	0.2194\quad 	-3.4382\quad 	0.8288\quad 	-0.2098\quad 	0.1897],\\	
[K_{89,88} \quad  K_{89,89}\quad K_{89,90}]&=[-9.4979\quad 	-0.0249\quad 	-6.6046\quad 	0.8321\quad 	-0.5005\quad 	-0.0135],\\	
[K_{90,89} \quad  K_{90,90}\quad K_{90,91}]&=[-2.423\quad 	-0.0613\quad 	-2.3712\quad 	-0.0198\quad 	0.0001\quad 	-0.0537],\\	
[K_{91,90} \quad  K_{91,91}\quad K_{91,92}]&=[-26.3586\quad 	-2.9351\quad 	-28.3509\quad 	-4.9078\quad 	-1.1789\quad 	-2.6498],\\													[K_{92,91} \quad  K_{92,92}\quad K_{92,93}]&=[-2.6833\quad 	0.0773\quad 	-1.9083\quad 	0.4083\quad 	-0.1292\quad 	0.0764],
\end{align*}
\end{figure*}

\begin{figure*}[t]
\normalsize
\setcounter{mytempeqncnt2}{\value{equation}}
  \begin{align*}	
[K_{93,92} \quad  K_{93,93}\quad K_{93,94}]&=[-3.4982\quad 	0.0357\quad 	-3.8512\quad 	1.3488\quad 	0.0001\quad 	0.0638],\\	
[K_{94,93} \quad  K_{94,94}\quad K_{94,95}]&=[5.5228\quad 	0.5227\quad 	-0.6696\quad 	0.5871\quad 	0.3009\quad 	0.4298],\\		
[K_{95,94} \quad  K_{95,95}\quad K_{95,96}]&=[34.8414\quad 	4.1204\quad 	23.0285\quad 	5.49\quad 	1.6694\quad 	3.2287],\\	
[K_{96,95} \quad  K_{96,96}\quad K_{96,97}]&=[-0.7316\quad 	0.0305\quad 	-5.1084\quad 	0.0292\quad 	0.0018\quad 	0.0308],\\	
[K_{97,96} \quad  K_{97,97}\quad K_{97,98}]&=[-26.5513\quad 	3.803\quad 	64.5189\quad 	-9.1848\quad 	5.1759\quad 	2.7955],\\	
[K_{98,97} \quad  K_{98,98}\quad K_{98,99}]&=[3.8297\quad 	0.6094\quad 	-1.4615\quad 	0.6398\quad 	0.4987\quad 	0.4543],\\	
[K_{99,98} \quad  K_{99,99}\quad K_{99,100}]&=[6.8023\quad 	0.9071\quad 	4.2065\quad 	0.7711\quad 	0.0006\quad 	0.7979],\\	
[K_{100,99} \quad  K_{100,100} ]&=[-5.6206\quad 	0.1551\quad 	-3.1762\quad 	0.4535 ].\\
~\\~\\~\\~\\~\\~\\~\\~\\~\\~\\~\\~\\~\\~\\~\\~\\~\\~\\~\\~\\~\\~\\~\\~\\~\\~\\~\\~\\~\\~\\~\\~\\~\\~\\~\\~\\~\\~\\~\\~\\
~\\~\\~\\~\\~\\~\\~\\~\\~\\~\\~\\~\\~\\~\\~\\~\\
\end{align*}
\end{figure*}
\end{document}